\pgfplotsset{compat=newest} 
\begin{document}
 \author{
   \IEEEauthorblockN{
   Shuai~Zhang,~\IEEEmembership{Student~Member,~IEEE},
   Lu~Liu,~\IEEEmembership{Student~Member,~IEEE},
   Yu~Cheng,~\IEEEmembership{Senior~Member,~IEEE},
   Xianghui~Cao,~\IEEEmembership{Senior~Member,~IEEE},
   Sheng~Zhou,~\IEEEmembership{Member,~IEEE},
   Zhisheng~Niu,~\IEEEmembership{Fellow,~IEEE},
   Hangguan~Shan,~\IEEEmembership{Member,~IEEE}}%
   \thanks{S. Zhang, L. Liu, Y. Cheng are with Department of Electrical and Computer Engineering, Illinois Institute of Technology, Chicago, IL 60616 (email: \{szhang104,lliu\}@hawk.iit.edu; ycheng@iit.edu)}
   \thanks{X. Cao is with School of Automation, Southeast University, Nanjing, 210018, China (email: xh.cao@ieee.org)}
   \thanks{S. Zhou and Z. Niu are with Tsinghua National Laboratory for Information Science and Technology, Department of Electronic Engineering, Tsinghua University, Beijing 100084, China (email: \{sheng.zhou, niuzhs\}@tsinghua.edu.cn)}
   \thanks{H. Shan is with the College of Information Science and Electronic Engineering, Zhejiang University, Hangzhou 310027, China (e-mail: hshan@zju.edu.cn)}}
\newcommand{\citeneeded}[1][]{\textsuperscript{\color{blue} [citation needed: #1]}}
\title{Energy Efficient Massive MIMO through Distributed Precoder Design}
\maketitle
\begin{abstract}
This paper presents an energy-efficient downlink precoding scheme with the objective of maximizing system energy efficiency in a multi-cell massive MIMO system. The proposed precoding design jointly considers the issues of power control, interference management, antenna switching and user throughput in a cluster of base stations (BS). We demonstrate that the precoding design can be formulated into a general sparsity-inducing non-convex problem, which is NP-hard. We thus apply a smooth approximation of zero-norm in the antenna power management to enable the application of the gradient-based algorithms.
The non-convexity of the problem may also cause slow convergence or even divergence if some classical gradient algorithms are directly applied. We thus develop an efficient alternative algorithm combining features from augmented multiplier (AM) and quadratic programming (QP) to guarantee the convergence. We theoretically prove the convergence conditions for our algorithm both locally and globally under realistic assumptions. Our proposed algorithms further facilitate a distributed implementation that reduces backhaul overhead by offloading data-intensive steps to local computation at each BS. Numerical results confirm that our methods indeed achieve higher energy efficiency with superior convergence rate, compared to some well-known existing methods.
\end{abstract}

\begin{IEEEkeywords}
  Massive MIMO, 5G, beamforming, energy-efficiency, distributed optimization
\end{IEEEkeywords}

\section{Introduction}
A central theme in developing the next generation mobile communication system is the drive for higher data rate at a lower energy consumption. Today's communication system designer must be aware of both the throughput performance as well as the corresponding power cost, and as a result Energy Efficiency (EE) has become an important metric for network evaluation and optimization \cite{feng_harvest_2016,miao2010energy,cao_towards_2018}. EE is usually defined as the ratio between the achieved throughput (in bits/s) and the corresponding power consumption. Since in today's communication systems there are increasingly more interacting subsystems, it is not immediately obvious whether previously proposed network control schemes, aiming to achieve either high system throughput or low power, can still work under high EE requirements. It is a worthy question to see if it is viable to follow the network utility optimization path to design a network: how to maximize EE when the scenario is constrained by quality-of-service (QoS) requirements and power limits, or maintaining a certain level of fairness among users and base stations (BSs) as reported in \cite{chen_adaptive_2016,nguyen_distributed_2017,zappone_globally_2017}.

For achieving new levels of high EE communication, Massive MIMO is a promising technology currently under test and development for the 5G communication standard. Its salient feature is that BSs are equipped with an excessive number of antennas, much higher than the number of served users, to achieve aggressive diversity gain. The enabling mechanism behind this benefit is \emph{favorable propagation}, i.e., when the number of BS antennas $M \to \infty$, the channel characteristics become almost deterministic, and the radio links become near-orthogonal to each other such that the effects of fading, intra-cell interference and uncorrelated noise disappear. Under an appropriate system configuration, ideally very large multiplexing and array gains can be achieved. According to the theoretical analysis in \cite {marzetta_noncooperative_2010} \cite{hoydis_massive_2013}, a system with $M$ antennas and $K$ users ($M \gg K$) could achieve $O(M)$ gain in effective signal-to-noise-ratio (SNR), indicating achievable equal throughput with only $1/M$ power consumption compared to the single-antenna system. On the other hand, this technology would necessarily require deploying many antennas and the corresponding radio circuits, which are known to have low power-efficiency limitation\cite{liu_energy_2017}. This might be a source of power waste when the traffic demand is low. In sum, Massive MIMO has great potential in improving EE in cellular systems, but would require careful planning and design choice to achieve it.

The gain in EE would even be greater if Massive MIMO is combined with small cells deployment. As shown in theoretical analysis\cite{rajoria_comprehensive_2018}, smaller cell size and higher BS antenna count can both contribute to higher throughput. A typical example of it is Heterogeneous Networks (HetNet): a central master base station (MBS) provides coverage for a \emph{macro cell}, and within it many small base stations (SBSs) form their own \emph{small cells} as an overlay. Such an architecture is especially useful in densely populated urban areas, where a large part of the traffic comes from confined areas with high traffic called \emph{hotspots}. A central BS may not be able to provide satisfactory QoS in an energy-efficient manner, due to either congestion, interference or unfavorable channels. In this case, small cells not only offload part of the traffic, but also reduce the power requirements for MBS because of geometrical proximity to the users. In order to take full advantage of such a heterogeneous setup for high QoS provisioning or energy efficiency, it is important to coordinate the BSs and optimize the resource allocation in HetNets \cite{adhikary2015massive,tang2015resource,cai2016green}. For example, when the SBSs make use of the same frequency resource, the inter-tier interference can be significant; due to their small sizes users in small cells are more likely to suffer from inter-cell interference (ICI).\cite{grieger2015multicell,dahrouj_coordinated_2010}

Precoding is the technique for enabling multi-user transmission in a multi-cell cluster. By varying the amplitude and phase of different antennas, linear precoding schemes like Zero-Forcing or minimum mean-square error could achieve near-optimal capacity, and they help suppress interference with available channel multiplicity. However, it is not clear if in the context of energy-efficiency a dynamically optimized precoding scheme can do better. This is especially important in large-scale MIMO systems with small cell deployment, where base stations may be equipped with tens or hundreds of antennas, since inter-cell interference and antenna power management may become dominant factors for energy-efficiency considerations. 
Based on the above points, we study the precoder design problem maximizing the EE, in the scenario of a cooperating Massive MIMO-enabled small cell BS cluster. We propose a framework which jointly considers factors including power control, BS antenna switching and interference, conventionally solved as separate problems. Towards this end we show that an efficient and distributed algorithm can be applied to solve the problem.

The main contributions of this paper can be summarized as the following:
\begin{itemize}
\item We consider the downlink transmission of multiple massive MIMO BSs, particularly the SBSs in a cooperative cluster. They jointly determine their power control, precoding vectors and BS antenna switching to achieve high system EE. We show that it is a general sparsity-inducing non-convex problem with a separable structure, which is considered an NP-hard problem to solve for a global solution.

\item We leverage the separable structure to transform the problem so that it can be solved in a distributed manner. The smooth approximation of zero-norm constraints are given, and we also show the series of transformations needed to arrive at the iterative algorithm combining features from Newton's Method and augmented multiplier method. We first give a theoretical proof to show that it is convergent when the initial solution is sufficiently close, which is not a trivial property when the objective is non-convex. We then show the additional steps it requires to guarantee convergence for any initial solution.

\item We present numerical results to demonstrate the effectiveness of the proposed algorithm, both in terms of convergence speed and achieved energy efficiencies, under different system parameters, with results from other schemes for comparison. Some useful conclusions for system deployment can be learned from the results.
\end{itemize}

\textit{Organization and Notation:} The remainder of the paper is organized as follows. Section \ref{sec-related} introduces more current state of this area of research. Section \ref{sec-model} states the system model and problem formulation. In Section \ref{sec-anal} the motivation, derivation and analysis of the algorithm and the transformation of the problem needed are given. Section \ref{sec-num} presents the numerical results for performance evaluation and comparison with other methods, and the lessons learned from the application of such method. Section \ref{sec-summary} concludes the work and gives additional remarks. Mathematical notation note: in this paper we use calligraphic letters ($\mathcal{A}$) for sets, capital letters ($A$) for the set cardinality and corresponding lower-case letter as a specific member of the set ($a \in \mathcal{A}$). Bold letters denote a vector or matrix, and square brackets with subscript is a vector by enumeration, e.g. $[x_{b}]_{b\in \mathcal{B}}$ is a set of all $x_b$ when $b$ ranges from the set $\mathcal {B}$. $\circ$ denotes element-wise product.

\section{Related Work}\label{sec-related}
The precoding process refers to the scaling and phase changing manipulation of transmitting signals such that the received signals could have desired properties. Massive MIMO is a natural extension of the Multi-User MIMO, where the massive number of antennas can only be put to use when precoding vectors are properly selected for a system design goal. The original paper \cite{marzetta_noncooperative_2010} gives a closed-form expression to the key system performance metric and discusses the choice of key system parameters like antenna and user number. Based on those, there have been many works on the design guidelines to optimize system performance metrics \cite{bjornson_massive_2016,jiang_joint_2018}. However, those results assume certain kinds of precoding schemes and the problem of designing energy-efficient system when precoding scheme is also considered has yet to receive enough attention. The mainstream approach is linear precoding, which calculates the coefficients for linear combination at the receiver, for example zero-forcing, and signal-to-leakage-and-noise-ratio(SLNR). They are cheap to implement at the cost of sub-optimal system throughput. One thread of research is to extend these results to multi-cell scenarios and distribute the computation; however many do not consider the energy implications and could be operating in low EE regime. Also a central controller who are assumed to have CSI information from the BSs could bring high communication and processing overhead which could result in hidden energy costs; another related, but different approach is to maximize the minimum SINR with power constraints. While easier to solve, these problems often share the drawbacks of not adapting to the current traffic: the constraints need to be reset and found by hand or another process in order to operate efficiently.


For advanced convex optimization techniques in the multi-agent setting, currently the most popular method is Alternating Direction Multiplier Method. In \cite{shen_distributed_2012} ADMM is used in combination with Semidefinite Relaxation to solve the coordinated beamforming with uncertainties modeled as ellipsoid in the CSI for throughput gain. The authors of \cite{wang_robust_2013} further extend the work to consider a general form of CSI uncertainties that gives closed-form solutions in the strongly convex cases. However they invariably need to rely on the usage of Semidefinite relaxation, which ignores one of the matrix rank constraints in order to readily apply ADMM. A joint solution regarding BS clustering and beamforming is given in \cite{hong_joint_2013}. \cite{chang2014distributed} provides a general framework for using primal-dual perturbation method, which can be seen as a version of the multiplier methods for optimization that has involved and coupled constraints. Another good introduction of the ADMM algorithm use in wireless network is presented in \cite{shi_large-scale_2015}, which gives the result of how to obtain infeasibility certificate and speed-up tricks. Yet they all have to either model the problem in the convex form, which is limiting for EE design goal, or use convex approximation at local iterations, incurring additional complexity. Another important mathematical tool to deal with CSI uncertainty is random matrix theory, where the linear precoders could be adapted to the stochastic forms according to the level of available channel knowledge. While providing a low-cost computation with reasonable performance, the drawback in this approach is that they need to be built on existing assumptions on the modeled system, e.g., the used precoding scheme, knowledge on the channel state, making their results dependent on the system specification and lacking the ability to generalize.

\section{System Model and Formulation}\label{sec-model}
\subsection{System Model}
We consider a Massive MIMO network deployed in a two-tier HetNet topology, with a macro-cell covered by MBS and small cells of SBSs on top of it, as illustrated by Fig. \ref{fig-system}. There is a set of BSs $\mathcal{B} = \{0,1,\cdots, B\}$, with MBS as the $0$-th one, and the other $B$ SBSs. Each BS $b \in \mathcal{B}$ is equipped with a set of antennas $\mathcal{N}_{b}$, which could be different as the network may be heterogeneous. Each of the $K$ single-antenna users $u_k \in \mathcal{U}$ can associate with one or more BSs and suppose that the set of users associated with BS $b$ is $\mathcal{U}_{b}$. For clarity, the associated BSs of a given user $u_k$ is given by the set $\mathcal{B}_k$.
\begin{figure}[h]
  \centering
  \includegraphics[width=0.5\textwidth]{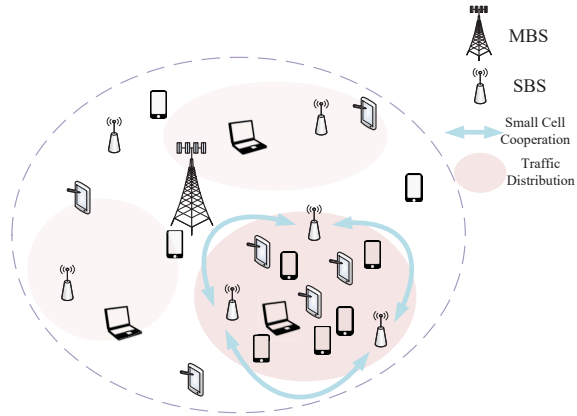}
  \caption{The system diagram of the scenario under consideration}
  \label{fig-system}
\end{figure}

The system works in the Time Division Duplex (TDD) mode, and we consider Rayleigh fading in the channels. During one coherence block, the channel is considered flat and we model the received base-band signal of user $u_k$ from base station $b$ as
\begin{align}\label{eq-dlsig}
  y_{b} = & \sqrt{p_{b,k}} \mathbf{h}^{H}_{b,k} \mathbf{w}_{b,k} s_{b,k} + \sum_{i \neq k, i\in \mathcal{U}_b} \mathbf{h}^{H}_{b,k} \mathbf{w}_{b,i} s_{b,i} \nonumber \\
            & + \sum_{i\in \mathcal{B}\setminus b}  \mathbf{h}^{H}_{i,k} \sum_{j \in \mathcal{U}_i} \mathbf{w}_{i,j} s_{i,j} + \mathbf{n}_{b}
\end{align}
, where $\mathbf{h}_{b,k}\in \mathbb{C}^{N_b}$ represents the channel between BS antennas and the single user antenna, and with the Rayleigh assumption satisfying i.i.d distribution $\mathbf{h}_{b,k} \sim \mathcal{CN}(\mathbf{0}, \beta_{b,k} \mathbf{I}_{N_b})$. The magnitude expressed as $\beta_{b,k}$ is the large-scale fading between the BS and the user, caused by path loss and other environment diffraction. 

The vectors $\mathbf{w}_{b,k} \in \mathbb{C}^{N_b}$ are the precoding vectors used by BS $b$ for user $k$. This is calculated based on the estimation of channel state information (CSI) in order to utilize the multiplicity provided by the available array of antennas when transmitting DL signals. For any BS $b$, the transmitted signal is the weighted linear combination for all users:
\begin{equation}\label{eq:ue-received}
\mathbf{x}_b = \sum_{k\in\mathcal{U}_b} \mathbf{w}_{b,k} s_{b,k}
\end{equation}

In the general case all BSs are allowed to transmit to all users, but according to the recent studies it is often more efficient to have only one BS per UE, which is the assumed operating mode here. Each term in Eq. (\ref{eq-dlsig}) is the intended signal power, intra- and inter-cell interference, and all uncorrelated signal plus receiver noise. $s_{b,k} \sim \mathcal{CN}(0,1)$ is the intended signal from BS $b$ to user $k$ and is assumed to be drawn from a Gaussian code book.

In terms of power relations, from the above assumptions we can write the Signal-to-Interference-and-Noise per user ($\text{SINR}_{bk}$) as given by
\begin{subequations}
\begin{align}
  \text{SINR}_{b_k} &= \frac{|\mathbf{h}^H_{b,bk} \mathbf{w}_{bk}|^2}{I_{bk} + N_{bk}} \label{eq:dlsinr} \\
             I_{bk} &= \sum_{i \neq k, i\in \mathcal{U}_b} |\mathbf{h}^{H}_{b,bk} \mathbf{w}_{bi}|^2 + \sum_{i\in \mathcal{B}\setminus b} \sum_{j \in \mathcal{U}_i} |\mathbf{h}^{H}_{i,bk} \mathbf{w}_{ij}|^2 \label{eq:interference}
\end{align}
\end{subequations}
where $I_{bk}$ is the interference power expressed as the sum of intra- and inter-cell interference, and $N_{bk}$ is the noise power of the user across the used spectral band. 

The rate achievable to user $b_k$ is then
\begin{equation}\label{eq:dlrateuser}
  r_{bk} = \log(1+\text{SINR}_{b_k})
\end{equation}

\subsection{Power Model}
The power of a Massive MIMO BS station should be modeled in a way that reflects the real-world operating costs. Simplified models where the amortized power level decreases to zero as the number of antennas goes to infinity is unrealistic in EE analysis, since the diminishing return of SE and non-linear increase in system power cost is not considered. To remedy this, the power consumption of a BS is divided into dynamic and static parts. The dynamic part represents the precoding controlled radio transmission power, and is a function of the throughput, which ultimately depends on factors like beamforming vectors, power control and number of active antennas, which are the control variables in response to the traffic.

One major part of the static power lies with the hardware power consumption $P_{\text{hard}}$, which consists of the active radio power, which is a linear function of the number of active antennas. This is to represent the associated radio power cost that is more of an overhead, not dependent on the activity level as opposed to the dynamic power.

Another source of power models the signal processing and backhaul transmission $P_{\text{SP}}$, which is a function of the amount of needed data for computation and transmission\cite{rubio_energy-aware_2014}, estimated from the algorithm. It incorporates the power used to encode/decode symbols, estimate channels and calculate necessary control signals in addition to the backhaul transmission costs. All radio-related power terms are further scaled by a hardware efficiency factor $\theta_b$, so as to model the effects of different hardware capabilities in the presence of heterogeneous transmitter.
In essence the base station power under consideration is the sum of the following (base station subscript $b$ omitted for clarity):
\begin{equation}
\begin{aligned}\label{eq:power}
P &= P_{\text{dyn}} + P_{\text{hard}} + P_{\text{SP}}\\
               &= \theta (\text{Tr}(\bm{U}\bm{U}^H) + N_b P_{\text{ant}}) + P_{\text{fixed}} + \sum_{k \in \mathcal{U}} r_k \cdot P_{\text{SP-Unit}}
\end{aligned}
\end{equation}
where $\bm{U}$ is the shorthand for $\mathbf{w}\mathbf{w}^T$, the correlation matrix for precoding vector; $N_b$ is the number of \emph{active} antennas with the constant $P_{\text{ant}}$ the overhead power cost per antenna; $P_{\text{fixed}}$ is the constant term for other power consumption and $r_k$ is the data rate for the $k$-th user in the current cell; $P_{\text{SP-Unit}}$ is the signal processing power per unit data flow.

\subsection{Sparse Solution}
Since the base station power constitutes almost 60\% of all power usage in a communication system\cite{sawahashi_coordinated_2010}, and the majority of them dedicated to radio circuits, it is necessary to try to find precoding vectors that utilize few antennas when it is feasible, e.g., when the QoS requirements are satisfied in low data-demand scenarios. Although this may contradict with the main features of Massive MIMO, whose gain results from adding BS antennas to maximize the utilization of degree-of-freedoms available in the channel, it could be beneficial to turn off the excess antennas when high throughput is less important than energy saving. In terms of optimization this could translate into a penalty term that measure the sparseness of the solutions. 
The optimization problem combining constraints on sparsity has been explored, e.g. in \cite{majumdar2009fast}. Optimally, antennas which when put together do not improve diversity gains should not be selected. From this starting point many reported heuristics base their calculation on the channel correlation \cite{hu2015new}, or in the simple case select those with the strongest channel. However, it remains to be seen if there is a way to directly calculate a good subset of antennae for transmitting in a real-world setting, because the problem is a combinatorial programming known to be NP-hard. This class of sparse solution problem finds many uses in classification, machine learning and signal processing. One common way of approximating this problem is to use $\mathcal{L}_1$ norm. This is a commonly used technique and they are equivalent asymptotically in the high dimension regime. In this paper we use the $\mathcal{L}_1$ norm of $\mathbf{w}$ vector for the sparseness metric.
To explore such possibilities, the $N_b$ in Eq. \ref{eq:power} is taken as the number of non-empty vectors in the precoding matrix $\bm{W}_b$:
\begin{equation}\label{eq:antenna-count}
\hat{N}_b = || \bm{W}^H \bm{W} ||_0 
\end{equation}
\newcommand{\opnzc}[1]{|| #1 ||} 
\newcommand{\anybs}{\forall b \in \mathcal{B}_C}
Such an addition is by no means trivial; it forces the solved solution to have group sparsity in the precoding vectors. $\mathcal{L}_0$ norms as constraints essentially transform the original problem to a combinatorial optimization --- since it is reducible to solving a $\texttt{Optimum Subset}$ problem, which is of NP-hard class. Moreover, its presence is troublesome for numerical algorithms as it is not differentiable, therefore to use the usual gradient-based algorithms transformations are necessary to make it tractable. 

\subsection{Problem Formulation}
The problem of maximizing the energy efficiency utility within a BS cluster $\mathcal{B}$ can be formulated as such: the optimization variables are $\{\mathbf{w}_{bk}\}_{bk}$, the set of all precoding vectors, and $\theta_b$ the individual hardware efficiency of base station $b$, and $U$ is a utility function of all BS energy efficiencies defined on $\mathbb{R}^B \mapsto \mathbb{R}$. We will use a linear combination of the individual EE for illustration, although as long as it is twice differentiable with a separate structure: $U(\bm{\eta}) = \sum_{b} U_b(\bm{\eta}_b)$ our algorithm still applies. Constraints Eq. (\ref{eq:req-user-rate}) requires that individual user throughput needs to be larger than a fixed lowest value; Constraint Eq. (\ref{eq:req-bs-rate}) specifies that each BS is limited by the available amount of backhaul. Eq. (\ref{eq:req-bs-power}) provides a hard limit on the total power per BS, denoted as $\hat{P}_{bb}$, which is determined by the maximum allowed operation power. The rest of the constraints specifies the throughput and antenna count expressions used. Then the problem is expressed as:
\begin{subequations}\label{eq:prob1}
  \begin{alignat}{2}
    \underset{\{\mathbf{w}_{bk}\}_{bk}}{\text{maximize}}~&
    & &U(\bm{\eta}) = \sum_{b\in\mathcal{B}}c_b\eta_b = \sum_{b \in \mathcal{B}} c_b \frac{\sum_{k\in\mathcal{U}_b} r_{bk}}{P_b}\\
    \text{s.t.}~&
    & & r_{bk} \geq \underline{r}_{bk} \quad \forall b \in \mathcal{B}~\forall k \in \mathcal{U}_b \label{eq:req-user-rate} \\
    &&& \sum_{k} r_{bk} \leq \bar{r}_{b} \quad \forall b \in \mathcal{B}~\forall k \in \mathcal{U}_b \label{eq:req-bs-rate} \\
    &&& \theta_b (\sum_{k \in \mathcal{U}_b} ||\mathbf{w}_{bk}||_2^2 + N_b P_{\text{ant}}) + P_{\text{fixed}} + \nonumber\\
    &&& \sum_{k \in \mathcal{U}_b} r_{bk} \cdot P_{\text{SP-Unit}} \leq \bar{P}_{bb}, \quad \forall b \in \mathcal{B}~\forall k \in \mathcal{U}_b \label{eq:req-bs-power} \\
    &&& \text{(\ref{eq:dlsinr})}, \text{(\ref{eq:interference})}, \text{(\ref{eq:dlrateuser})}, \text{(\ref{eq:antenna-count})} \quad \forall b \in \mathcal{B}~\forall k \in \mathcal{U}_b
  \end{alignat}
\end{subequations}


\section{Decentralized Precoding Algorithm}\label{sec-anal}
\newcommand{\myl}{\hat{\mathbf{l}}}
\newcommand{\vect}[1]{\boldsymbol{\mathbf{#1}}}
The problem in the form as it stands cannot be solved efficiently with current numerical techniques. This is because 1) the objective function and the constraint are non-convex and non-smooth; 2) the interference calculation requires the central controller to process a very large matrix. In a general non-convex problem it would result in an unacceptable run-time. To deal with these issues, first we transform the original problem in Problem (\ref{eq:prob1}) into the following problem with newly added auxiliary variables:
\begin{subequations}\label{eq:prob2}
\begin{alignat}{2}
   \underset{\mathbf{w}, \bm{\rho}, \hat{\mathbf{P}}, \mathbf{t}, \bm{\xi},\bm{\zeta},\mathbf{s}}{\text{minimize}}~&
  & & -\sum_{b \in \mathcal{B}}c_b \rho_b \\
   \text{s.t.}~&
  & & \sum_{k \in \mathcal{B}} \xi_{bk} \cdot \hat{P}_b^{-1} \geq \rho_b \quad\forall b \label{eq-prob21} \\
  &&& \zeta_{bk} \geq \log_2 (1 + s_{bk}) \geq \xi_{bk}, \quad \forall b,k \label{eq-prob22} \\
  &&& |\mathbf{h}^H_{b,bk} \mathbf{w}_{bk}|^2 / t_{bk} \geq s_{bk}, \quad \forall b,k \label{eq-prob23} \\
  &&& I_{bk} + N_{bk} \leq t_{bk}, \quad \forall b,k \label{eq-prob24} \\
  &&& \sum_{k \in \mathcal{U}_b} \zeta_{bk} \leq \bar{r}_b \quad\forall b \label{eq-prob25}\\
  &&& \theta_b (\sum_{k \in \mathcal{U}_b} ||\mathbf{w}_{bk}||_2^2 + N_b P_{\text{ant}}) + P_{\text{fixed}} + \nonumber\\
  &&& \quad \sum_{k \in \mathcal{U}_b} \zeta_{bk} \cdot P_{\text{SP-Unit}} \leq \hat{P}_b \quad \forall b \label{eq-power}
  \end{alignat}
\end{subequations}

\begin{proposition}
Problem (\ref{eq:prob2}) has the same optimal solution to problem (\ref{eq:prob1}).
\end{proposition}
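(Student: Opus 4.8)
\emph{Plan.} The plan is to check that (\ref{eq:prob2}) is the epigraph reformulation of (\ref{eq:prob1}): the auxiliary variables $\bm\rho,\hat{\mathbf{P}},\mathbf{t},\bm\xi,\bm\zeta,\mathbf{s}$ only provide slack for the fractional objective and for the coupled rate/power expressions, and at any optimizer each of the added inequalities (\ref{eq-prob21})--(\ref{eq-power}) must be active, so that (\ref{eq:prob2}) collapses onto (\ref{eq:prob1}) up to the sign flip in the objective (recall (\ref{eq:prob2}) minimizes $-\sum_b c_b\rho_b$, i.e.\ maximizes $\sum_b c_b\rho_b$). I would establish this by two inclusions. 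For the easy one, take any $\mathbf{w}$ feasible for (\ref{eq:prob1}) and set $t_{bk}=I_{bk}+N_{bk}$, $s_{bk}=\text{SINR}_{bk}$, $\xi_{bk}=\zeta_{bk}=r_{bk}=\log_2(1+s_{bk})$, $\hat P_b=P_b$ (the power of (\ref{eq:power}) evaluated at $\mathbf{w}$), and $\rho_b=\eta_b$; then (\ref{eq-prob23})--(\ref{eq-prob24}) hold by construction, (\ref{eq-prob21})--(\ref{eq-prob22}) and (\ref{eq-power}) hold with equality, (\ref{eq-prob25}) is exactly (\ref{eq:req-bs-rate}), and the objective of (\ref{eq:prob2}) equals $-U(\bm\eta)$. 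Hence the optimal value of (\ref{eq:prob2}) is at most $-U^\star$, where $U^\star$ denotes the optimal value of (\ref{eq:prob1}).

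\emph{Reverse inclusion.} Conversely, I would take an optimizer of (\ref{eq:prob2}) and tighten the slacks in the order in which the variables are coupled. Since (\ref{eq-power}) forces $\hat P_b>0$ and $c_b>0$, the ratio constraint (\ref{eq-prob21}) is active: $\rho_b=(\sum_k\xi_{bk})/\hat P_b$. The variable $t_{bk}$ appears only in (\ref{eq-prob23})--(\ref{eq-prob24}) and lowering it only relaxes (\ref{eq-prob23}), so take $t_{bk}=I_{bk}+N_{bk}$, after which (\ref{eq-prob23}) reads $s_{bk}\le\text{SINR}_{bk}$. The variable $\zeta_{bk}$ appears only in (\ref{eq-prob22}) as a lower bound $\zeta_{bk}\ge\log_2(1+s_{bk})$ and in (\ref{eq-prob25}) and (\ref{eq-power}), in both of which smaller is better, so $\zeta_{bk}=\log_2(1+s_{bk})$; likewise $\xi_{bk}=\log_2(1+s_{bk})$. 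With these substitutions the BS-$b$ term of the objective equals $c_b S_b/(A_b(\mathbf{w})+P_{\text{SP-Unit}}S_b)$, where $S_b=\sum_k\log_2(1+s_{bk})$ and $A_b(\mathbf{w})=\theta_b(\sum_k\|\mathbf{w}_{bk}\|_2^2+N_bP_{\text{ant}})+P_{\text{fixed}}$, and this is increasing in $S_b$, so each $s_{bk}$ is driven up to $\text{SINR}_{bk}$. Then $\xi_{bk}=\zeta_{bk}=r_{bk}(\mathbf{w})$, $\hat P_b=P_b(\mathbf{w})$, and (\ref{eq-prob25}) becomes (\ref{eq:req-bs-rate}), so $\mathbf{w}$ is feasible for (\ref{eq:prob1}) with $\eta_b=\rho_b$; therefore $\sum_b c_b\rho_b\le U^\star$, i.e.\ the optimal value of (\ref{eq:prob2}) is at least $-U^\star$. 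Together with the easy inclusion the optimal values agree up to sign, and since the two constructions are mutually inverse on the precoder block $\{\mathbf{w}_{bk}\}$, the optimizers correspond.

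\emph{Main obstacle.} The delicate step is driving $s_{bk}$ up to $\text{SINR}_{bk}$: $s_{bk}$ and $\zeta_{bk}$ sit in several constraints with no uniform direction of improvement, since raising $s_{bk}$ enlarges the EE numerator but, through (\ref{eq-prob22}), forces $\zeta_{bk}$ up and hence tightens both the backhaul bound (\ref{eq-prob25}) and the power bound (\ref{eq-power}). I would handle this by (i) performing the tightenings strictly in the order above, so that the bounds on $\xi_{bk},\zeta_{bk},\hat P_b$ are already at their extreme feasible values when $s_{bk}$ is finally increased, and (ii) ruling out that an optimizer of (\ref{eq:prob2}) ever uses a precoder violating (\ref{eq:req-bs-rate}): at such a $\mathbf{w}$ the BS-$b$ term saturates at $c_b\bar r_b/(A_b(\mathbf{w})+P_{\text{SP-Unit}}\bar r_b)$, and scaling the transmit power of BS $b$ down until (\ref{eq:req-bs-rate}) holds with equality keeps $S_b=\bar r_b$ while strictly decreasing $A_b(\mathbf{w})$, which improves the objective --- a contradiction (modulo checking that this scaling respects the other users' QoS and the power cap). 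A minor loose end is that the per-user QoS bound (\ref{eq:req-user-rate}) and the hard power cap (\ref{eq:req-bs-power}) appear in both problems (the former as a lower bound on $\xi_{bk}$, the latter as $\hat P_b\le\bar{P}_{bb}$) and, after the tightening, carve out the same set of feasible precoders, completing the equivalence.
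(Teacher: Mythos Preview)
Your approach is essentially the same as the paper's: both argue that each auxiliary inequality in (\ref{eq:prob2}) must be active at an optimizer, since any residual slack can be tightened to improve the objective without violating the remaining constraints, after which (\ref{eq:prob2}) collapses onto (\ref{eq:prob1}). Your version is considerably more careful---organized as two inclusions and explicitly flagging the tension between driving $s_{bk}$ up to $\text{SINR}_{bk}$ and the backhaul/power bounds (\ref{eq-prob25}), (\ref{eq-power}) acting through $\zeta_{bk}$---a subtlety the paper's brief contradiction sketch simply asserts away rather than resolves.
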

\begin{proof}
If the constraints with newly added variables are equal at optimum, we can easily check that it is equivalent to any optimum of the Problem (\ref{eq:prob1}). Then by contradiction one can see that these constraints cannot possibly take strict inequality sign at optimum. For example, if Eq. (\ref{eq-prob22}) were to take strict less than, then one can choose a lower value for $\hat{P}_b$ such that equality is assumed, without violating other constraints. Moreover, with smaller $\hat{P}_b$, the lower bound $\rho_b$ can be increased which leads to a lower objective. The same arguments can be made for all newly modified constraints, because any strict inequality would result in ``free lunch'', the adjustments that do not violate existing constraints and improve the objective values. Hence we can say that these constraints must be equal at optimum, at which point it is the same as the problem before transformation. 
\end{proof}
This purpose of this step is to mainly eliminate the non-linear equality constraints, because even when the non-linear functions are convex, equality constraints containing them cannot be convex; hence it is necessary to put them into inequality forms; also this step reduces the amount of coupled terms between the constraints such that it is easier to analyze and solve with existing algorithmic frameworks. This is done by substituting variables with their upper or lower bounds. With the above preposition, one can see that at optimum such inequalities will all turn to equalities. However, the constraints that comes with non-convex, non-smooth terms are still present and needs to be treated in the algorithm development.

\subsection{Smooth Approximation of the 0-norm}\label{subsec-smooth}
The 0-norm counts the number of non-zero elements in a vector. It is difficult to handle in the usual numerical algorithm design procedure because it is not differentiable or even smooth, rendering typical gradient-based methods ineffective. The difficulty does not stop at a lack of gradient: 0-norm constraints can be reduced in polynomial time to the combinatorial optimization problem of selecting the optimum subset, hence it in fact encodes an NP-class problem. 

For optimizing the antenna selection in this manner there are methods, e.g., the typically used branch-and-bound\cite{nau_general_general_brb}. Alternatively, one may use a greedy-based heuristic antenna selection procedure first for picking out a reasonably good subset. Here we propose to use a smoothing approximation to include antenna selection into our algorithmic framework, which incurs a lower complexity and naturally lends its form to gradient-based methods.

The key observation towards making such an approximation is to associate all antennas $j$ in the BS $b$ with random variables $\bm{\zeta} = \{\zeta_{bj}\}_{b \in \mathcal{B}, j \in \{1,2,\cdots, N_b\}}$, which may be called ``switch variables'' and take on values in $[0,1]$. The precoding vector for user $k$ in BS $b$ changes from $\mathbf{w}_{bk} = [w_{bk,1} ~ w_{bk,2}~\cdots~w_{bk,N_b}]^T$, where elements $w_{bk,j}$ are signal weight for one antenna $j$ for BS user $bk$ have sparsity constraints since we want some of them to be zero across all $K$ users, to $[\hat{w}_{bk,1}\zeta_{b1}~\hat{w}_{bk,2}\zeta_{b2}~\cdots~\hat{w}_{bk,N_b}\zeta_{b,N_b}]^T$, where $\hat{w}_{bk,j}$ do not have sparsity constraints. Notice that switch variables $\zeta$ are set per BS antenna and is the same across the users. Now the sparsity requirement need to be expressed in terms of constraints on $\{\zeta_{bj}\}_{b\in\mathcal{B},j \in \mathcal{A}_b}$.

We can still see that the new precoding vector expression is connected to the original problem: if the $\zeta$ random variables are set in a way that take value $0$ with higher probability for some of the antennas more than other antennas, then we can see that it means the sparse solution is to set those precoding values $w_{bk,j}$ to zero. The natural candidate is to model $\zeta$ as binomial random variables, and tweak the success probability as an optimization variable; but its discrete nature makes it hard to work with gradient methods, so instead we consider its continuous approximation \textit{concrete distribution} \cite{maddison_concrete_2016}. This approximation is a parametric family of continuous distributions, which are crafted to allow gradients to be derived at points that correspond to its discrete counterparts.

This approximate distribution which gives the probability distribution on $[0, 1]$ is generated like this, with subscript $b$ for individual base station omitted:
\begin{subequations}
\begin{alignat}{2}
\zeta &= \min(1, \max(0,\eta)) \label{eq:gate-expr}\\
\eta &\sim q(\frac{\eta-\eta_0}{\eta_1-\eta_0};\bm{\phi} ) \label{eq:gate-latent-dist} \\
q(x;\bm{\phi}) &= \frac{\beta \alpha x^{-\beta-1} (1-x)^{-\beta-1} }{ (\alpha x^{-\beta} + (1-x)^{-\beta})^2 }, \quad \bm{\phi} &= \{\alpha,\beta\} \label{eq:dist-model}
\end{alignat}
\end{subequations}

Starting from a uniform random variable from $[\eta_0, \eta_1]$, we first normalize it to $[0, 1]$ as shown in Eq. (\ref{eq:gate-latent-dist}). Eq. (\ref{eq:dist-model}) is the underlying distribution model used, which is a continuous approximated version of binary Bernoulli distribution taking a number in $[0,1]$ as the input, illustrated in Fig.(\ref{fig:concrete}). The ``shape", or how much probability is assigned to the value at two end points is controlled by parameters $\bm{\phi}$ which are $\alpha$ and $\beta$. This random variable is further limited by a hard sigmoid to produce values in $[0,1]$. By experiment we find that it is better to have the starting neighborhood $[\eta_0, \eta_1]$ larger than $[0,1]$, e.g. $[-0.2,1.1]$. 

As a result, the number of active antennas in the precoding vector is approximated by the expectation of the number of non-zero switch values, which can be calculated from the non-zero probability of the distribution given by Eq. (\ref{eq:gate-latent-dist}):
\begin{subequations}\label{eq:Nbapprox}
\begin{alignat}{2}
N_b &\approx \text{E}[\sum_{j\in \mathcal{N}_b}\zeta_{bj}] = \sum_{j \in \mathcal{N}_b} \big( 1 - Q(0; \bm{\phi}_{b,j}) \big) \\
Q(x;\bm{\phi}) &= Q_0(\frac{\eta-\eta_0}{\eta_1 - \eta_0}; \bm{\phi})\\
Q_0(x;\phi) &= \exp \Big( \beta\big(\log x - \log (1-x)\big) - \log\alpha \Big) / \nonumber \\
&\Big(\exp \Big( \beta\big(\log x - \log (1-x)\big) - \log\alpha \Big) + 1\Big)
\end{alignat}
\end{subequations}
where $Q(x;\bm{\phi})$ is the CDF of random variable $\zeta$ in Eq.(\ref{eq:gate-expr}).

In this way, instead of directly optimizing precoding vectors $\mathbf{w}$ for sparse solutions, the distribution parameters $\bm{\phi}_{b,j}$ are optimized such that the induced distribution would push switch variables to $0$ whenever possible. The price to pay is that the expressions involving the precoding vector, e.g., $g(\mathbf{w})$ needs to be replaced by its expectation over the distribution of $\zeta$, which can be approximated by its sample mean: $\frac{1}{S} \sum_{s=1}^S g( \hat{\mathbf{w}} \circ \mathbf{\zeta}^{(s)})$, where $S$ is the number of samples of $\zeta$ and $\zeta^{(s)}$ the $s$-th sample of $\zeta$.
\begin{figure}[h]
\includegraphics[width=0.3\textwidth]{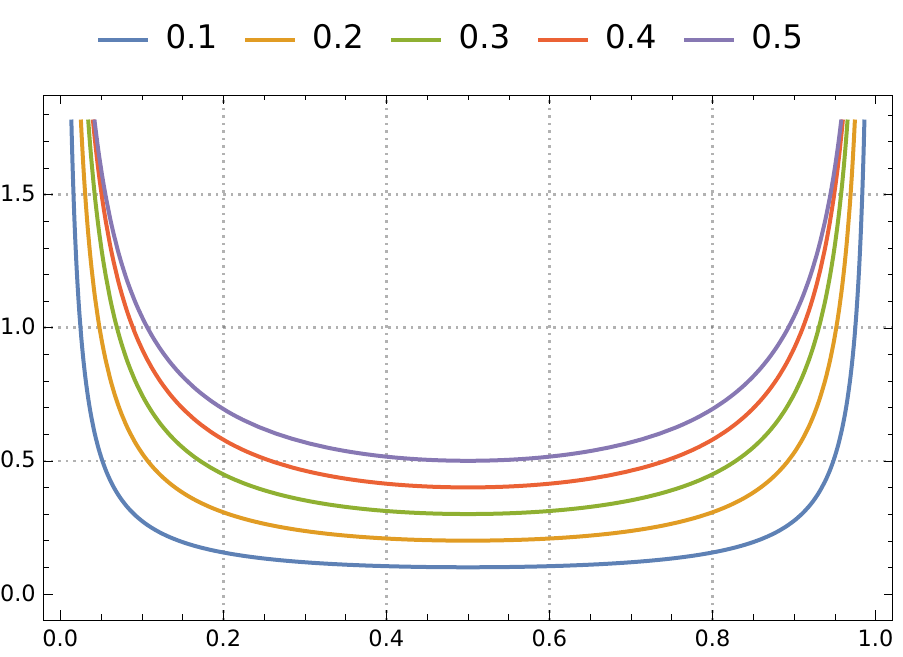}
\caption{The Concrete Distribution when $\alpha = 1$ and $\beta$ takes different values. Notice the similarity with Bernoulli distribution when $\beta$ is close to 0.}
\label{fig:concrete}
\end{figure}

\subsection{Formulation for Decentralization}
With the sparsity constraints smoothed by the above approximation, the next goal is to obtain a proper formulation where the possible separation is exploited to arrive at a decentralized energy efficiency algorithm. The assumption is that through a central controller all the participating BSs could exchange at least partial information on their neighbor's channel so that cooperation is possible, therefore there is a fundamental trade-off between the effects of coordination and the communication overhead; this assumption has been used in most of the literature concerning collaborating cells \cite{lakshminarayana2015coordinated,nauryzbayev2016enhanced}.


This step of transformation uses the consensus form programming \cite{nedic_constrained_2010}. The motivation is simple: constraints are said to be \emph{coupled} if the same variables are used by multiple BSs. Instead of considering them together, it is possible to distribute them to local estimated versions and align them together through iterations. This approach could resolve the inter-dependence between base stations and make it possible to offload part of the computation to the BS side. In this transformation, the cross-interference power terms are treated as the consensus value. We can let $\iota_{b\prime bk}$ be the ICI for cell $b\prime$ to a user $bk$, $\kappa_{b,bk}$ to be the intra-cell interference:
\begin{equation}
I_{bk} = 
\sum_{b\prime \neq b}\sum_{k \in \mathcal{U}_{b\prime}} |\mathbf{h}^H_{b\prime,b k} \mathbf{w}_{b\prime k\prime}|^2 +
\sum_{k\prime \neq k} |\mathbf{h}^H_{b,b k} \mathbf{w}_{b k\prime}|^2
= \sum_{b\prime \neq b} \iota_{b\prime bk} + \kappa_{b,bk}
\end{equation}

Each BS knows their ICI imposed on other, hence at any moment of the system there must exists such a vector that captures all the ICI's, which we define as 
\begin{equation}
\bm{\iota} = \{\bm{\iota}_b\}_b
= \{\iota_{b,b\prime k\prime}\}_{b,b\prime,k\prime} \in \mathbb{C}^{N_c(N_c - 1) \times K}
\end{equation}, because they know the CSI to users outside of their cell. At the same time each BS does not know the ICI caused by others to the users in its own cell, for which it does not have the CSI to calculate, hence there is a need to communicate at a central node. We define
\begin{equation}
\bm{\nu}_{b} = \{ \sum_{b\prime\neq b} \iota_{b\prime,b1}, \sum_{b\prime\neq b} \iota_{b\prime,b2}, \cdots, \sum_{b\prime\neq b} \iota_{b\prime,bK}, \bm{\iota}_{b}\}\in \mathbb{R}^{N_c\times K}
\end{equation}
And it's not hard to see that there is a linear mapping matrix $A_b$ with elements either $0$ or $1$ such that
\begin{equation}\label{eq:consistency}
A_b\bm{\iota} = \bm{\nu}_b
\end{equation}. This form greatly facilitates algorithm development, because $\bm{\nu}_b$ are the ``local'' information which can be calculated at each BS, and when they are gathered together at the center they must be reconciled to satisfy the consistency constraint. Instead of communicating all CSI information, now each BS only transmits $(N_c - 1)\times K$ power information, which does not scale with the massive number of antennas in our problem, and is usually quite small.

With the above transformations, we have smoothed and decoupled the constraints as much as possible; all but one constraint are functions of individual BS's own state $\mathbf{l}_b$. For clarity of notation and the following analysis, we rewrite the problem:
\begin{equation}\label{eq-packing}
  \begin{aligned}
    \mathcal{L} &= \{\mathbf{l}_{b} | ( \text{Constraints} (8b)-(8g),(\ref{eq:Nbapprox})~\text{hold}) \} \\
    \mathbf{l}_{b} &= [ \mathbf{w}_b, \rho_b, \hat{P}_b, \mathbf{t}_b, \bm{\xi}_b,\bm{\zeta}_b,\mathbf{s}_b, \bm{\nu}_b,\bm{\phi}_b ]
  \end{aligned}
\end{equation}
where $\mathcal{L}_b$ is the feasible region of all the BS-local variables. Each BS stores the state information $\mathbf{l}_{b}$, which will be updated in the following iterative algorithm, containing the precoding vectors, ICI, power and EE. The global state of the interference terms are put in vector $\mathbf{\iota}$. The shorthand of the problem is then
\begin{subequations}\label{eq-prob4}
  \begin{alignat}{3}
  \underset{\{\mathbf{l}_b\}_b}{\text{minimize}}~&
  & & { U(\mathbf{l}_b)} \nonumber\\
  \text{s.t.}~&
  & & h_{b}(\mathbf{l}_{b}) \leq 0~\forall b \label{eq:prob41} \\
  &&& A_b\bm{\iota} = \bm{\nu}_b~\forall b \label{eq:prob42}
\end{alignat}
\end{subequations}
where $h_{b}$ is a vector-valued indicator function whenever $\mathbf{l}_{b} \in \mathcal{L},~\forall b$, and constraint Eq. (\ref{eq:prob42}) encapsulates coupled connection between the BSs.

\subsection{Decentralized Precoding Algorithm}
At this point it is tempting to try applying the popular Alternating Direction Multiplier Method (ADMM) like stated in \cite{shen_distributed_2012}. However given the non-convexity of the problem, the direct application of ADMM could result in non-convergence. To remedy this issue a widely adopted approach is successive convex approximation (SCA) as reported in \cite{zhang_energy_2018}, which adds another outer loop outside of the ADMM iterations, causing the run time complexity to be quite high. Here we propose a novel algorithm to address these challenges. It is a combination of both multiplier and quadratic programming(QP) methods, which could solve non-convex problem to a numerical local solution at a reasonably fast converge speed and provable convergence. Note that with non-convex problems it is generally NP-hard to ensure global optimality, hence in engineering problems KKT-condition solutions are considered good enough. 

Based on Problem (\ref{eq-prob4}) which is equivalent to the original problem, we first ensure its feasibility by assuming the feasibility of the original problem in Eq. (\ref{eq:prob1}), and that the local optimum values are KKT points. The input of this algorithm is the utility functions, each BS's local state information in $\mathbf{l}_{b}$, and all BSs have knowledge of the coupling matrix $A$. The output is a KKT point of the original non-convex problem. During the iterations, the notations $\kappa_b$ and $\lambda$ is used to denote the dual variables associated with the non-convex constraints and linear equality constraints, respectively. Notice that we also assume that low-level solvers for convex and quadratic programming are available to a reasonable degree of accuracy.

\paragraph{Initialize} Each BS $b$ needs to give their estimates of their state variables $\{\mathbf{l}_{b}\}_{b=1,2,\cdots,L}$ and the corresponding dual variables $\bm{\lambda}_{b}$ as the initial solution. This initialization can be generated from past channel observations and other BS behavior and the quality should be affect the convergence speed as will shown below. The algorithm uses a few other algorithmic parameters, like the scaling coefficient sufficiently large $\rho > 0$, the weighing positive definite matrix $\Sigma_{b} \in \mathbb{S}_{+}^{n_l}$ and acceptable accuracy $\epsilon$. $\Sigma_{b}$ is a parameter for the generalized vector distance and is set to the identity matrix here.

\paragraph{Update individual estimate of local state} Each BS evaluates the following problem:
\begin{equation}\label{eq-alg2}
  \begin{aligned}
    & \underset{\myl_b}{\text{minimize}}
    & &{-U_i(\myl_b) + \lambda^T A_b \myl_b + \frac{\rho}{2} ||\myl_b - \mathbf{l}_b^{(k)}||^2_{\Sigma_{b}}} \\
    & \text{subject to}
    & &h_b(\myl_b) \leq 0
  \end{aligned}
\end{equation}
In this subproblem, the objective function is of the Augmented Lagrange multiplier form. Note that this is solved in a distributed manner; each BS optimizes what is best for its own objective, expressed in $\myl_b$, given the previous iterate solution $\mathbf{l}_b^{(k)}$. This step generates a first estimate $\myl_{b}$, which together with other calculated results would be part of the new solution in the next iteration. This is similar to the ADMM algorithm, where an Augmented Lagrangian is decomposed with respect to the individual BSs.

\paragraph{Termination condition} The next step checks the termination condition: when the affine constraints are satisfied \emph{and} the individual solutions are sufficiently close to the previous solution, the solution terminates and the current solution is treated as the final solution. This step is to avoid the more expensive negotiation steps when the current solution is good enough. Specifically, the following conditions must be met:
\begin{subequations}\label{eq-algter}
  \begin{alignat}{2}
    & \rho|| \sum_{b} \myl_b - \mathbf{l}_b ||_2^2 \leq \epsilon \\
    & ||\sum_{b\in \mathcal{B}_{C}} A_{b} \myl_{b} - \mathbf{a}||_2^2 \leq \epsilon
  \end{alignat}
\end{subequations}
The output of this algorithm is then set to be $\myl_b$ at this point, which by these two conditions automatically satisfy the KKT conditions of the original problem.

\paragraph{Negotiation} If the above conditions do not hold, this means that the individual solutions of the BSs are not feasible and must be modified, as represented by the violation of the affine constraints. This can be compared to a price negotiation analogy used in \cite{nemirovski_five_2002}: each agent first determine a local version of bidding, then if they do not agree use pricing as indicator to fix their bidding until a consensus is reached. Here at central controller this quadratic programming problem is solved to find a correction value $\Delta\mathbf{l}_b$:
\begin{subequations}\label{eq-alg3}
  \begin{alignat}{3}
    \underset{\Delta \mathbf{l}, s}{\text{minimize}}~&
    & &\sum_{b} (\frac{1}{2} \Delta \myl_b^T \Gamma_b \Delta\myl_b + \tilde{\nabla}U_i(\myl_b) \Delta\myl_b) + \lambda^T s + \frac{\rho_2}{2} ||s||^2_2 \label{eq-alg32} \\
    \text{s.t.}~&
    & & \sum_{b} A_b (\myl_b + \Delta \myl_b) = \mathbf{a} + \mathbf{s} \label{eq-alg33} \\
    &&& C_{b} \Delta \myl_b = 0 \quad \anybs \label{eq-alg34}
  \end{alignat}
\end{subequations}
where
\begin{subequations}
  \begin{align}
    \Gamma_{b} &\approx \nabla^2 \{ -U_b(\myl_b) + \kappa_{b}^T h_{b}(\myl_b) \} \label{eq-alg35} \\
    \tilde{\nabla} &= \nabla f_i(\myl_b) + (J_b - \tilde{J}_b)^T \kappa_i \label{eq-alg36} \\
    J_{i,j} &= \frac{\delta}{\delta x} (h_{i}(x)))_{j} |_{x=\myl_{i}} \text{ if } (h_i(y_i)) = 0 \label{eq-alg37}
  \end{align}
\end{subequations}
In this step the communication between different BSs take place and the computation is delegated to the quadratic programming solver at the central controller. It is motivated from an application of sequential quadratic programming (SQP), a common technique for solving non-convex constrained programming. The intuition is similar to the \emph{trust-region}: it is easier to approximate the original non-linear function with its second-order expansion and optimize this quadratic function instead, within a local region around the current iterate solution. To see how it plays into our algorithm development, first consider the minimization of an augmented Lagrangian function which incorporates both the primal and dual variable, and penalty terms in the first and second order form \cite{rockafellar1973multiplier}:
\begin{equation}\label{eq-lagrangian}
  \mathcal{L}(\myl, \lambda; \rho_2) = - \sum_{b} U_{b} + \lambda^T (\sum_{b} A_{b} \myl_b - a) + \frac{\rho_2}{2} || \sum_{b} A_{b}\myl_b - a||_2^2
\end{equation}
for which we need to find the best directions of $\Delta \myl$ such that $\mathcal{L}(\myl, \lambda)$ is minimized, subject to $h_b(\myl) \leq 0 \quad \forall b$. The new parameter $\rho_2$ is for regulating term and is assumed to be decreased in a way that does not exceed $O(||\mathbf{l} - \mathbf{l}^*||)$, which is common setting for numerical stability. From the theory of quadratic programming we know that there is a direct equivalence between a solving a Lagrangian minimization with Newton method and a quadratic programming \cite{wright1999numerical}, the minimization of Eq. (\ref{eq-lagrangian}) can be approximated by solving a quadratic programming problem:
\begin{subequations}
  \begin{alignat}{3}
    \underset{\Delta \myl}{\text{minimize}}~& 
    & &\sum_{i} \mathcal{L}(\myl + \Delta \myl, \lambda; \rho_2) \\
    \text{s.t.}~&
    & & h_{b}(\myl_b) + C_{b}\Delta \myl_{b} \leq 0 \quad \anybs
  \end{alignat}
\end{subequations}
where $C_{b}$ is the Jacobian for all the non-convex constraints $h_b$. When we expand this into this problem's variables, will come in the same form as the step in Eq. (\ref{eq-alg34}).
\begin{subequations}\label{eq-algqp}
  \begin{alignat}{3}
    & \underset{\Delta \myl, s}{\text{minimize}}
    & &  \sum_{b} \frac{1}{2} \Delta \myl_b^T \bm{H}_{b} \Delta \myl_b + g_{b}^T \Delta \myl_b + \mathbf{\lambda}^T\mathbf{s} + \frac{\rho_2}{2} ||\mathbf{s}||^2_2\\
    & \text{subject to}
    & & h_{b}(\myl_b) +  C_{b}\Delta y_{b} \leq 0 \quad \anybs\\
    &&& \sum_{b} A_b(\myl_b + \Delta\myl_b) = \mathbf{b} + \mathbf{s} \quad \anybs
  \end{alignat}
\end{subequations}

For numerical stability reasons another slack variable $s$ is added. Solving this subproblem would essentially give a correction to the local solutions gained from step b), which would help with the violation of constraints as expressed in the stopping conditions Step 2) in Algorithm \ref{alg:xxx}. The effect of the usage of quadratic programming is two-fold: the superior quadratic convergence helps reduce the overhead in the correction step, and the use of second-order term enables simpler analysis of the convergence. As the problem we aim to solve involves the inter-dependency between BSs, the price to pay is reflected here: each BS need to maintain the interference level caused by others and by itself, therefore the total problem scale in this step is $O(KL^2)$, where $L$ and $K$ are the numbers of cells and users per cell, respectively. In our examples this is still feasible because the number of cooperating cells $L$ does not grow to a very large value, since the limiting small-cell BS hardware capabilities do not allow cooperations of a large group. Although this is an expensive step, it is amenable to standard techniques which are implemented in available solvers like SeDuMi or SNOPT.

To further reduce the computation cost at this step, approximations can be used. In the proposed algorithm,  inaccurate Jacobian update rules \cite{wright1999numerical} are used to save space and time. Also $\Gamma_b$ can also be the approximate version of the Hessian matrix, which can calculated with only first-order gradient information like reported in \cite{byrd_limited_1995}.

\paragraph{Update Solution}
The new iteration solution is given by
\begin{equation}\label{eq-algupdate}
  \begin{aligned}
    \mathbf{l}_b^{(k+1)} &= \myl_b - \Delta\myl_b \\
    \bm{\lambda}^{(k+1)} &= \bm{\lambda}
  \end{aligned}
\end{equation}

This is the final step where the calculated result is used to update the current solution. The final step does not require additional communication from other BSs and is done at local level.

\subsection{Optimality and Convergence Property}
Assuming the final utility function is twice continuously differentiable, this algorithm then can be shown to be locally convergent. This means that if the estimated solution is sufficiently close to the optimum solution, then the algorithm will converge. This is because the step in Eq. (\ref{eq-alg34}) follows from the convergence results of SQP methods \cite{wright1999numerical}. We first show that the optimum result in the distributed optimizations in step a) of the algorithm provides a reasonably good solution:

\begin{lemma}\label{eq-prop1}
  Given twice continuously differentiable utility functions $U_b(\mathbf{l})$, and the KKT point $(\mathbf{l}^*, \lambda^*)$, and the condition $\nabla^2 _{} (U_{\mathbf{l}}(\mathbf{l}_b^*) + \kappa_{b}^T h_{b}(\mathbf{l}_b^*)) + \rho \Sigma_{i} \succ 0$ holds for some $\rho > 0$, and that $\mathbf{l}$ and $\mathbf{l}^*$ are sufficiently close, then the step in Eq. (\ref{eq-alg32}) has locally unique minimizers $\myl$ such that there exist constant $k_1 > 0$, $k_2 > 0$
  \begin{equation}
    ||\myl - \mathbf{l}^*|| \leq k_1 ||\mathbf{l} - \mathbf{l}^*|| + k_2 || \lambda - \lambda^*||
  \end{equation}
\end{lemma}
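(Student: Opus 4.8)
The plan is to treat the subproblem (\ref{eq-alg32}) as a parametric nonlinear program whose parameters are the previous primal iterate $\mathbf{l}^{(k)}_b = \mathbf{l}$ and the current dual estimate $\lambda$, and then to invoke a sensitivity/stability result for the associated KKT system. First I would write the first-order optimality conditions of (\ref{eq-alg32}): there is a multiplier $\kappa_b \ge 0$ with
\[
-\nabla U_b(\myl_b) + A_b^T\lambda + \rho\,\Sigma_b(\myl_b - \mathbf{l}) + J_b(\myl_b)^T\kappa_b = 0, \qquad 0 \le \kappa_b \perp -h_b(\myl_b) \ge 0,
\]
where $J_b$ is the Jacobian of $h_b$. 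Collecting the stationarity equation together with the active-constraint equations into a map $F_b(\myl_b,\kappa_b;\mathbf{l},\lambda)$, the key observation is that at $(\mathbf{l},\lambda) = (\mathbf{l}^*_b,\lambda^*)$ the proximal term $\rho\Sigma_b(\myl_b-\mathbf{l})$ vanishes at $\myl_b = \mathbf{l}^*_b$, so $F_b$ reduces exactly to the KKT system of the original Problem (\ref{eq-prob4}); hence $(\myl_b,\kappa_b) = (\mathbf{l}^*_b,\kappa^*_b)$ is a solution, where $\kappa^*_b$ is the multiplier of the assumed KKT point.

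Next I would establish that this solution is strongly regular (in Robinson's sense), so that the solution map $(\mathbf{l},\lambda)\mapsto \myl_b$ is single-valued and locally Lipschitz near $(\mathbf{l}^*_b,\lambda^*)$. Concretely, assuming LICQ and strict complementarity at the original KKT point — standard nondegeneracy assumptions, consistent with the stated premise that the local optima are KKT points — the active set is locally constant, so $F_b$ becomes a smooth square system in $(\myl_b,\kappa_b)$; its Jacobian with respect to $(\myl_b,\kappa_b)$ is the classical primal--dual (saddle-point) matrix, which is nonsingular precisely because the hypothesis $\nabla^2(U_{\mathbf{l}}(\mathbf{l}^*_b) + \kappa_b^T h_b(\mathbf{l}^*_b)) + \rho\Sigma_b \succ 0$ supplies positive definiteness of the Hessian block while LICQ supplies full row rank of the active Jacobian. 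The implicit function theorem then yields a $C^1$ (hence locally Lipschitz) branch $(\myl_b(\mathbf{l},\lambda),\kappa_b(\mathbf{l},\lambda))$ passing through $(\mathbf{l}^*_b,\kappa^*_b)$, and the same second-order condition guarantees that $\myl_b(\mathbf{l},\lambda)$ is the locally unique minimizer of (\ref{eq-alg32}).

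Finally, I would bound the Lipschitz constant of this branch separately in its two arguments: by the implicit function theorem $D\myl_b$ equals the inverse primal--dual matrix composed with $\partial F_b/\partial(\mathbf{l},\lambda)$, whose $\mathbf{l}$-block is $-\rho\Sigma_b$ and whose $\lambda$-block is $A_b^T$, so the two partial-derivative norms give constants $k_1,k_2 > 0$ with $\|\myl_b - \mathbf{l}^*_b\| \le k_1\|\mathbf{l} - \mathbf{l}^*_b\| + k_2\|\lambda - \lambda^*\|$ for $(\mathbf{l},\lambda)$ in a neighborhood of $(\mathbf{l}^*_b,\lambda^*)$; concatenating over $b\in\mathcal{B}$ — the subproblems are decoupled once $\lambda$ is fixed — yields the stated inequality for the full vectors $\myl$ and $\mathbf{l}$. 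The main obstacle is this stability step, namely dealing cleanly with the nonsmooth complementarity conditions. I would handle it either through Robinson's strong-regularity theorem applied directly to the KKT generalized equation (which avoids tracking the active set by hand) or, more elementarily, via the LICQ-plus-strict-complementarity argument above that freezes the active set locally and reduces everything to the implicit function theorem; in both routes the stated positive-definiteness condition and the ``sufficiently close'' hypothesis are exactly what keep the argument inside its regime of validity.
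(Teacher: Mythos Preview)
Your proposal is correct and follows essentially the same route as the paper: both arguments view the subproblem \eqref{eq-alg32} as a parametric program in $(\mathbf{l},\lambda)$, use the assumed positive definiteness of the Lagrangian Hessian to guarantee a well-defined, differentiable local minimizer, and then read off the Lipschitz bound from the partial derivatives of the solution map with respect to $\mathbf{l}$ and $\lambda$. Your treatment is in fact more careful than the paper's short proof, since you explicitly address the inequality constraints via LICQ/strict complementarity (or Robinson strong regularity) before invoking the implicit function theorem, whereas the paper simply asserts differentiability of the minimizer from the Hessian condition.
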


\begin{proof}
  First recall that
  \[
    \myl(\mathbf{l}^{(k)}, \lambda^{(k)}) = \argmin_{\myl_b} L = \argmin_{\myl_b} U_b(\myl_b) + \lambda^T A_{b}\myl_b + \frac{\rho}{2} || \myl_b - \mathbf{l}_b^{(k)} ||^2_{\Sigma_b}
  \]
  It is easy to check that the Hessian of the Lagrangian in Eq. (\ref{eq-alg32}), which is
  \begin{equation}
    \nabla^2 (-U_i(\mathbf{l}_b^*) + \kappa_{b}^T h_{b}(\mathbf{l}_b^*)) + \rho\Sigma_{b}
  \end{equation}
  are all positive definite for all $(\mathbf{l},\lambda)$ sufficiently close to the optimum $(\mathbf{l}^*, \lambda^*)$. Then the minimization results in Eq. (\ref{eq-alg32}) are well defined and differentiable in this neighborhood. The statement then holds because $|| \frac{\partial L}{\partial x}|| < k_1$ and $||\frac{\partial L}{\partial \lambda}|| < k_2$ hold from being evaluated at KKT points.
\end{proof}

We could then show that the distance between next iterate solutions $\mathbf{l}^{k+1}$ and $\lambda^{k+1}$ and their KKT points $\mathbf{l}^*$ and $\lambda^*$ are both upper-bounded:
\begin{theorem}\label{eq-prop2}
  If the exact Hessian and Jacobian are used in the step Eq. $(\ref{eq-alg34})$, then there exists a constant $\rho$ such that
  \begin{equation}
    \begin{aligned}
      & ||\mathbf{l}^{(k+1)} - \mathbf{l}^*|| \leq \frac{\rho}{2} ||\myl - \mathbf{l}^*||^2 \\
      & ||\lambda^{(k+1)} - \lambda^{*}|| \leq \frac{\rho}{2} ||\myl - \mathbf{l}^*||^2
    \end{aligned}
  \end{equation}
\end{theorem}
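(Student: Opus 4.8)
The plan is to identify the negotiation and update steps (\ref{eq-alg3})--(\ref{eq-algupdate}) with a single regularized Newton iteration on the first-order optimality system of Problem (\ref{eq-prob4}), with the step-a output $\myl$ serving as the point at which the linearization is taken, and then to invoke the classical local quadratic-convergence theory of SQP \cite{wright1999numerical}. First I would pin down the regularity hypotheses at the KKT pair $(\mathbf{l}^*,\lambda^*)$: the linear independence constraint qualification for the constraints active there together with the affine coupling $A_b\bm{\iota}=\bm{\nu}_b$, strict complementarity, and the second-order sufficient condition on the critical cone --- which is the restriction of the positive-definiteness assumption $\nabla^2(U_{\mathbf{l}}(\mathbf{l}_b^*)+\kappa_b^T h_b(\mathbf{l}_b^*))+\rho\Sigma_b\succ 0$ already used in Lemma \ref{eq-prop1}. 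Under strict complementarity the optimal active set is locally stable, so for $\myl$ close enough to $\mathbf{l}^*$ the QP (\ref{eq-algqp}) activates exactly the constraints active at $\mathbf{l}^*$; the inequalities $h_b(\myl_b)+C_b\Delta\myl_b\le 0$ may then be treated as equalities and the QP collapses to a square linear system.

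Second I would assemble the KKT map. Writing $z=(\mathbf{l},\kappa,\lambda)$ and letting $F(z)=0$ collect the stationarity equations of (\ref{eq-prob4}) together with the active-constraint and coupling equalities, the Jacobian $\nabla F(z^*)$ is the bordered matrix whose blocks are the Lagrangian Hessian $H_b=\nabla^2\{-U_b+\kappa_b^T h_b\}$, the active-constraint Jacobians $C_b$ and the coupling matrices $A_b$; by LICQ plus the second-order condition this matrix is nonsingular, hence nonsingular on a neighborhood, and since the logarithmic and quadratic data here have locally Lipschitz second derivatives $\nabla F$ is Lipschitz near $z^*$. With the \emph{exact} $H_b$ and $C_b$ (the hypothesis of the theorem), the KKT system of (\ref{eq-algqp}) coincides with the Newton system $\nabla F(\hat z)\,\Delta z=-F(\hat z)$ evaluated at $\hat z=(\myl,\kappa,\lambda)$, up to the perturbation caused by the slack $s$ and its penalty $\tfrac{\rho_2}{2}\|s\|^2$ replacing the hard coupling equality; since $\rho_2$ is driven to zero no faster than $O(\|\mathbf{l}-\mathbf{l}^*\|)$, that perturbation has operator norm $O(\|\myl-\mathbf{l}^*\|)$. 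Hence (\ref{eq-alg3})--(\ref{eq-algupdate}) realize $z^{(k+1)}=\hat z-(\nabla F(\hat z)+E_k)^{-1}F(\hat z)$ with $\|E_k\|=O(\|\myl-\mathbf{l}^*\|)$.

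Third I would run the standard perturbed-Newton estimate. Using $F(\hat z)=\nabla F(z^*)(\hat z-z^*)+O(\|\hat z-z^*\|^2)$ and $(\nabla F(\hat z)+E_k)^{-1}=\nabla F(z^*)^{-1}+O(\|\hat z-z^*\|)$, the $O(\|\hat z-z^*\|)$ error in the inverse multiplies the $O(\|\hat z-z^*\|)$ leading part of $F$, so the first-order terms cancel and $\|z^{(k+1)}-z^*\|\le C\|\hat z-z^*\|^2$ with $C$ proportional to $\|\nabla F(z^*)^{-1}\|$ times the Lipschitz modulus of $\nabla F$; near the solution $\|\hat z-z^*\|$ is controlled by $\|\myl-\mathbf{l}^*\|$ (the dual pieces being of the same order), so setting $\rho:=2C$ and reading off the primal and dual components yields the two displayed bounds. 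Chaining this with Lemma \ref{eq-prop1} then gives the per-iteration recursion $\|\mathbf{l}^{(k+1)}-\mathbf{l}^*\|\le\frac{\rho}{2}\big(k_1\|\mathbf{l}^{(k)}-\mathbf{l}^*\|+k_2\|\lambda^{(k)}-\lambda^*\|\big)^2$, i.e. local convergence of the whole outer loop.

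The hard part will be the second step: making the identification ``solution of QP (\ref{eq-algqp}) $=$ regularized Newton step on $F$'' genuinely rigorous. This requires (i) the active-set identification, i.e. a stability-of-solutions argument for the parametric QP that uses strict complementarity and LICQ; (ii) verifying that the QP's KKT matrix stays invertible uniformly in a neighborhood of $z^*$, not merely at $z^*$; and (iii) bookkeeping that the slack-plus-penalty reformulation of the consensus constraint (\ref{eq-prob42}), together with evaluating at $\myl$ rather than $\mathbf{l}^*$, perturbs the Newton system by exactly $O(\|\mathbf{l}-\mathbf{l}^*\|)$ --- no worse --- so that quadratic order survives; once this is in place the estimate in the third step is routine. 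I would also note honestly that, as written, (\ref{eq-algupdate}) sets $\lambda^{(k+1)}=\lambda$, so the bound on $\|\lambda^{(k+1)}-\lambda^*\|$ should be read for the multiplier that the QP (\ref{eq-alg3}) returns for the coupling constraint, which is the quantity that actually plays the role of the dual Newton iterate.
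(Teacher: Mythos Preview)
Your proposal is correct and follows essentially the same route as the paper: identify the QP step (\ref{eq-alg3})--(\ref{eq-algupdate}) with a Newton iteration on the KKT system of Problem (\ref{eq-prob4}), observe that the only deviation from an exact Newton step is the $(1/\rho_2)I$ block coming from the slack penalty, and then invoke the local quadratic convergence of Newton/SQP \cite{wright1999numerical}, finally chaining with Lemma \ref{eq-prop1}. The paper's own proof is considerably terser --- it writes down the block Newton matrix directly, asserts the equivalence, and appeals to Newton convergence in one line --- whereas you spell out the regularity hypotheses (LICQ, strict complementarity, second-order sufficiency), the active-set identification, and the perturbed-Newton arithmetic that the paper leaves implicit; your honest flag about the $\lambda^{(k+1)}$ update and the role of the QP multiplier is also a point the paper does not confront.
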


\begin{proof}
The quadratic programming problem in Eq. (\ref{eq-alg3}), if at optimum, must satisfy the first optimality test. Specifically by using Newton equation in solving for the best Lagrange direction, we have such equality:
\[
\begin{bmatrix}
H^* &  A^T & C \\
A   & \frac{1}{\rho_2} I & 0\\
C   & 0  &  0
\end{bmatrix}
\begin{bmatrix}
\Delta \myl\\
\lambda^{(k+1)} - \lambda^{(k)}\\
\kappa_{QP}
\end{bmatrix}
=
\begin{bmatrix}
\nabla_y \sum_{b} -U_i + A^T\lambda\\
a - A\myl\\
0
\end{bmatrix}
\]

which is equivalent to applying the Newton method for solving the original problem in Eq. (\ref{eq-alg3}). Given that exact Hessian and Jacobian matrices are sued here, the source of inaccuracy comes from the addition of $\frac{1}{\rho_2}I$, which goes to zero if $\rho_2$ is sufficiently large as solution is approached, see the discussion about Eq. (\ref{eq-alg3}). Hence the iteration has the same quadratic convergence as the Newton method does. Namely,
\begin{equation}
  \begin{aligned}
    &||\mathbf{l}^{(k+1)} - \mathbf{l}^{*}|| \leq \alpha ||\myl - \mathbf{l}^{*}||^2\\
    &||\lambda^{(k+1)} - \lambda^{*}|| \leq \alpha ||\myl - \mathbf{l}^{*}||^2
  \end{aligned}
\end{equation}
holds for some $\alpha > 0$.

Combine this result with the above result from Lemma (\ref{eq-prop1}), it is not hard to arrive at
\begin{align}
  &k_1|| \mathbf{l}^{(k+1)} - \mathbf{l}^{*} || + k_2 ||\lambda^{(k+1)} - \lambda^{(k)}|| \\
  \leq &\frac{\alpha k_1 + \alpha k_2}{2} ( ||\mathbf{l}^{(k+1)} - \mathbf{l}^{*}||^2
+ ||\lambda^{+} - \lambda^*||^2) \nonumber
\end{align}
given that $k_1$, $k_2$ are all positive coefficients, this establishes the local quadratic convergence.
\end{proof}


%
\begin{algorithm}[htbp]

 \caption{Proposed EE Optimization Algorithm}\label{alg:xxx}

\textbf{Input:} Initial guesses $\mathbf{l}_b$ and dual variables $\lambda$, $\kappa_b$ and a numerical tolerance $\epsilon > 0$, and algorithm parameters $\rho, \rho_2$

\textbf{Do:}

\begin{enumerate}
\item Choose a sufficiently large penalty parameter $\rho \ge 0$ and positive semidefinite scaling matrices $\Sigma_i \in \mathbb{S}_{+}^{n_x}$ and solve for all $i \in \{1,\dots ,N\}$ the decoupled optimization problems Prob. (\ref{eq-alg2}) to a KKT solution.

\item If $\parallel \sum_{b} A_b \myl_b - a \parallel \le \epsilon$ and $\rho\parallel \sum_{b}(\myl_b - x_b) \parallel_1 \le \epsilon$, terminate with $\mathbf{l}_b^* = \myl_b$ as the solution.
\item Choose approximations $C_i \approx C_i^*$ of the Jacobian matrices $C_i^*$ defined by
    \[ C_{i,j}^* =
  \begin{cases}
    \left. \frac{\partial}{\partial x}(h_i(\mathbf{l}))_i \right |_{x=\myl_b} & \quad \text{if } (h_i(\myl_j))_j = 0 \\
    0  & \quad \text{otherwise}\\
  \end{cases}
\]
    Compute the modified gradient
    \[
    g_b = \nabla -U_b(y_b) + (C_b^*-C_b)^T \kappa_i\]
    and calculate Hessian approximations
    \[
    H_i \approx \nabla^2 \{-U_b(\myl_b) + \kappa^T h_b(\myl_b)\}\].
\item Choose a sufficiently large penalty parameter $\rho_2 > 0$ and solve the coupled QP in Eq. \ref{eq-algqp}.
\item Update the iterates as in Eq.(\ref{eq-algupdate}) and continue with step 1.
\end{enumerate}
\end{algorithm}

\subsection{Parameter Tuning for Global Convergence}
This algorithm can be shown to be convergent regardless of the initial solution quality, if further steps are taken in the choice of step parameters $\alpha_1, \alpha_2,$ and $\alpha_3$. To show that the convergence depends on these values, consider the penalty function and the difference between successive steps. We can define the penalty function in terms of how severe the constraints are violated and how optimal the objective function are:
\begin{equation}
  P(\mathbf{l}_{b}) = \sum_{b} -U_{b}(\mathbf{l}_{b}) + \kappa_L \sum_{b} \max(h_{b}(\mathbf{l}_b), 0) + \lambda_L\sum_{b} ||A_b l_b - a ||
\end{equation}
where $\kappa_L$ and $\lambda_L$ are sufficiently large positive constants. The difference $\Delta P = P(l^{(k+1)}) - P(l^{(k)})$, if can be shown to be lower bounded by a constant above zero, then can lead to the convergence of this algorithm. However, due to non-convexity, it is possible for the algorithms to be stuck at points where the primal variables $\mathbf{l}_b$ changes but $|\Delta P|$ stagnates. Therefore new strategies for fine-tuning the algorithm is needed.

\begin{lemma}\label{lemma:suff-lowerbound}
If the difference of successive penalty function is lower bounded by
\begin{equation}\label{eq:suff-lowerbound}
\gamma ( \sum_{i=1}^N ( \frac{\rho}{2} ||y_i-x_i||_{\Sigma_i}^2 + \lambda ||\sum_{i=1}^N A_iy_i - b||_1 ) )
\end{equation} then it is sufficient that the algorithm always terminates after a finite number of iterations.
\end{lemma}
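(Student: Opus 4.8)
The plan is to argue by a monotonicity/telescoping argument: if Algorithm~\ref{alg:xxx} never satisfied the exit test of Step~2, the penalty function $P$ would decrease by a fixed positive amount at every iteration, and since $P$ is bounded below this can occur only finitely often, so the exit test must eventually be met.

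First I would establish that $P$ is bounded below. The two penalty terms $\kappa_L\sum_b\max(h_b(\mathbf{l}_b),0)$ and $\lambda_L\sum_b\|A_b l_b-a\|$ are nonnegative by construction, so it suffices to bound $-\sum_b U_b$ from below; this follows because the power model (\ref{eq-power}) gives $P_b \ge \big(\sum_{k\in\mathcal{U}_b}r_{bk}\big)P_{\text{SP-Unit}}$, hence $0\le\eta_b\le 1/P_{\text{SP-Unit}}$ and $-U_b=-c_b\eta_b\ge -c_b/P_{\text{SP-Unit}}$. Summing over $b$ gives a uniform lower bound $P_{\min}$ that is valid at every iterate, feasible or not.

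Next I would show that whenever iteration $k$ fails to trigger termination, the bracketed quantity in (\ref{eq:suff-lowerbound}) is bounded below by a fixed $\delta>0$. Failure of the exit test means at least one of the two residuals exceeds $\epsilon$: either $\|\sum_b A_b\myl_b-a\|>\epsilon$, in which case the $\lambda$-term of (\ref{eq:suff-lowerbound}) alone is $\ge\lambda\epsilon$ (using $\|\cdot\|_1\ge\|\cdot\|_2$); or $\rho\|\sum_b(\myl_b-\mathbf{l}_b^{(k)})\|_1>\epsilon$, in which case the triangle inequality, the equivalence of $\ell_1$ and $\ell_2$ norms on the fixed finite-dimensional state space, and Cauchy--Schwarz over $b$ give $\sum_b\tfrac{\rho}{2}\|\myl_b-\mathbf{l}_b^{(k)}\|_{\Sigma_b}^2\ge c(\epsilon,\rho,N,\Sigma)>0$, where one uses $\Sigma_b\succ0$ ($=I$ here). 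Taking $\delta$ to be the smaller of the two lower bounds and invoking the hypothesis gives $P(\mathbf{l}^{(k)})-P(\mathbf{l}^{(k+1)})\ge\gamma\delta>0$ for every non-terminating iteration. Telescoping over $T$ such iterations yields $P(\mathbf{l}^{(T)})\le P(\mathbf{l}^{(0)})-T\gamma\delta$, which together with $P(\mathbf{l}^{(T)})\ge P_{\min}$ forces $T\le(P(\mathbf{l}^{(0)})-P_{\min})/(\gamma\delta)$; hence the exit test is met after at most $\lceil(P(\mathbf{l}^{(0)})-P_{\min})/(\gamma\delta)\rceil$ iterations.

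The hard part will be this norm-comparison bookkeeping in the second step: matching the $\ell_1$-type residuals that the algorithm actually tests against in Step~2 with the squared $\Sigma_b$-norm and $\ell_1$ terms appearing in (\ref{eq:suff-lowerbound}), and checking that the resulting $\delta$ is genuinely positive (and not hopelessly small in the problem size). Two smaller points also need care: the hypothesis must be read as a sufficient-\emph{decrease} inequality $P(\mathbf{l}^{(k)})-P(\mathbf{l}^{(k+1)})\ge\gamma(\cdots)$ (the sign in the surrounding discussion of $\Delta P$ is loose), and since the accepted iterate is $\myl_b-\Delta\myl_b$ rather than $\myl_b$, the per-iteration decrease should be charged against exactly the residuals $\myl_b-\mathbf{l}_b^{(k)}$ and $\sum_b A_b\myl_b-a$ that enter the exit test, not against $\Delta\myl_b$ separately.
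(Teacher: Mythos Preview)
Your proposal is correct and follows essentially the same route as the paper: argue that failure of the Step~2 exit test forces the bracketed quantity in (\ref{eq:suff-lowerbound}) to exceed a fixed positive $\delta$ depending only on $\epsilon,\rho,\lambda$, so $P$ drops by at least $\gamma\delta$ each round, and boundedness of $P$ from below forces finite termination. The paper's version is terser---it simply writes $\Delta P\ge\gamma\min(\lambda_L\epsilon,\tfrac{1}{2\rho}\epsilon^2)$ and asserts $P$ has a minimum because the original problem is feasible---whereas you supply an explicit lower bound on $P$ via the power model and flag the norm-mismatch bookkeeping between the exit residuals and the terms in (\ref{eq:suff-lowerbound}); these are welcome refinements but not a different argument.
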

\begin{proof}
By considering the contrapositive, i.d., if the algorithm cannot terminate, meaning that either one of the loop-breaking conditions listed in step 2) are not met. This would suggest that either one of the following is true: 
\begin{subequations}
\begin{align}
&||\sum_{i=1}^N A_iy_i - b||_1 > \epsilon\\
&\sum_{i=1}^N \rho || y_i - x_i ||_{\Sigma_i}^2 > \epsilon
\end{align}
\end{subequations}

Therefore expression (\ref{eq:suff-lowerbound}) in either case must be at least as large as one of the two, implying
\begin{align}
\Delta P & \geq \gamma ( \sum_{i=1}^N ( \frac{\rho}{2} ||y_i-x_i||_{\Sigma_i}^2 + \lambda ||\sum_{i=1}^N A_iy_i - b||_1 ) )\\
         & \geq \gamma \min (\lambda_L\epsilon, \frac{1}{2\rho}\epsilon^2)
\end{align}
which is a positive constant and a function of the algorithm constants $\epsilon$, $\lambda$ and $\rho$ only. When the original problem is feasible, the penalty function must have a minimum. With each successive step it decreases by at least such a positive amount, it is certain that the algorithm can exit after finitely many iterations.
\end{proof}

By this lemma we need only to consider the cases when the condition (\ref{eq:suff-lowerbound}) are not satisfied. To see how this is entirely possible, suppose that the initial solution is very far from the optimum. In this case with large penalty terms the difference between successive penalty function values can be small enough to fail the condition check. To provide a backup solution iterate $\tilde{x}^{(k+1)}$ which can always leave (\ref{eq:suff-lowerbound}) satisfied, one can consider such an auxiliary problem:
\begin{equation}\label{eqs:aux1}
  \begin{alignedat}{2}
    \underset{y}{\text{minimize}}~&
    &&  \sum_{i} f_i(y_i) + \frac{\rho}{2} ||y_i - x_i||^2_{\Sigma_{i}} \\
    \text{s.t.}~&
    &&  h_{i}(y_i) \leq 0 \quad \forall i\\
    &&& \sum_{i}^N A_{i}x_i - b = 0 \quad \forall i
  \end{alignedat}
\end{equation}

This problem form comes from the regularization terms commonly used in proximal optimizations \cite{combettes_proximal_2011}. Solving this problem (\ref{eqs:aux1}) yields a next iteration $y^*$ that will satisfies the condition (\ref{eq:suff-lowerbound}).

\begin{lemma}
The solution to the auxiliary problem (\ref{eqs:aux1}) provides an update that will satisfy the sufficient convergence condition.
\end{lemma}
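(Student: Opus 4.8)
The plan is to reduce the lemma to a single estimate on the penalty decrease produced by the backup update $\tilde{x}^{(k+1)}:=y^{*}$, where $y^{*}$ denotes the minimizer of (\ref{eqs:aux1}). First I would exploit feasibility of $y^{*}$: since $y^{*}$ solves (\ref{eqs:aux1}) it satisfies $h_i(y^{*}_i)\le 0$ for every $i$ and (reading the last constraint of (\ref{eqs:aux1}) as $\sum_i A_i y_i=b$) the coupling constraint $\sum_i A_i y^{*}_i=b$ exactly. Hence both constraint-violation terms of the penalty function $P$ vanish at $y^{*}$, so $P(y^{*})=\sum_i f_i(y^{*}_i)$, and the second summand of the target bound (\ref{eq:suff-lowerbound}) is zero. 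The lemma therefore reduces to proving $P(x^{(k)})-P(y^{*})\ge\gamma\,\tfrac{\rho}{2}\sum_i\|y^{*}_i-x_i\|_{\Sigma_i}^{2}$ for some $\gamma\in(0,1]$.

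Next I would introduce $\Phi(y):=\sum_i\bigl(f_i(y_i)+\tfrac{\rho}{2}\|y_i-x_i\|_{\Sigma_i}^{2}\bigr)$, the objective of (\ref{eqs:aux1}), and record the elementary identity $\sum_i\bigl(f_i(x_i)-f_i(y^{*}_i)\bigr)=\Phi(x^{(k)})-\Phi(y^{*})+\tfrac{\rho}{2}\sum_i\|y^{*}_i-x_i\|_{\Sigma_i}^{2}$, which holds because the proximal term vanishes at $y=x^{(k)}$. Substituting into the definition of $P$ gives
\[
P(x^{(k)})-P(y^{*})=\bigl(\Phi(x^{(k)})-\Phi(y^{*})\bigr)+\tfrac{\rho}{2}\sum_i\|y^{*}_i-x_i\|_{\Sigma_i}^{2}+\kappa_L\sum_i\max(h_i(x_i),0)+\lambda_L\|\sum_i A_i x_i-b\|_1 ,
\]
so everything hinges on a lower bound for $\Phi(x^{(k)})-\Phi(y^{*})$.

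Taking $\rho$ large enough that $\nabla^{2}f_i+\rho\Sigma_i\succeq 0$ on the (compact) feasible region — the analogue, for the auxiliary objective, of the curvature hypothesis already invoked in Lemma \ref{eq-prop1} — makes $\Phi$ convex there, so $\Phi(x^{(k)})-\Phi(y^{*})\ge\langle\nabla\Phi(y^{*}),x^{(k)}-y^{*}\rangle$. Writing the KKT conditions of (\ref{eqs:aux1}), with multipliers $\mu\ge 0$ for $h$ and $\nu$ for the coupling constraint, yields $\nabla\Phi(y^{*})=-J_h(y^{*})^{T}\mu-A^{T}\nu$, hence $\langle\nabla\Phi(y^{*}),x^{(k)}-y^{*}\rangle=-\mu^{T}J_h(y^{*})(x^{(k)}-y^{*})-\nu^{T}(Ax^{(k)}-b)$. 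Convexity of the $h_i$ together with complementary slackness gives $\mu^{T}J_h(y^{*})(x^{(k)}-y^{*})\le\mu^{T}h(x^{(k)})\le\|\mu\|_\infty\sum_i\max(h_i(x_i),0)$, while $|\nu^{T}(Ax^{(k)}-b)|\le\|\nu\|_\infty\|\sum_i A_i x_i-b\|_1$. Collecting terms,
\[
P(x^{(k)})-P(y^{*})\ge\tfrac{\rho}{2}\sum_i\|y^{*}_i-x_i\|_{\Sigma_i}^{2}+(\kappa_L-\|\mu\|_\infty)\sum_i\max(h_i(x_i),0)+(\lambda_L-\|\nu\|_\infty)\|\sum_i A_i x_i-b\|_1 .
\]
Since $\kappa_L$ and $\lambda_L$ were taken ``sufficiently large'' (this is the exact-penalty step), the last two terms are nonnegative, giving $P(x^{(k)})-P(y^{*})\ge\tfrac{\rho}{2}\sum_i\|y^{*}_i-x_i\|_{\Sigma_i}^{2}$, i.e.\ the required estimate with any $\gamma\in(0,1]$; Lemma \ref{lemma:suff-lowerbound} then delivers finite termination.

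The step I expect to be the real obstacle is keeping $\Phi(x^{(k)})-\Phi(y^{*})\ge\langle\nabla\Phi(y^{*}),x^{(k)}-y^{*}\rangle$ legitimate when the current iterate $x^{(k)}$ is far from optimal: this inequality needs $\Phi$ convex along the entire segment joining $x^{(k)}$ and $y^{*}$, which forces a genuinely global — not merely local — lower-curvature bound on the $f_i=-U_i$ over the feasible set and a correspondingly large $\rho$. The companion difficulty is the uniform boundedness of the auxiliary multipliers $\mu,\nu$ across iterations, so that a single fixed pair $\kappa_L,\lambda_L$ works for every $k$; I would derive this from compactness of the feasible region together with a constraint qualification (MFCQ or LICQ) inherited from the standing feasibility assumption on (\ref{eq:prob1}). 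The remaining pieces — the $\Phi$-identity, the complementary-slackness estimate, and the bookkeeping with the penalty constants — are routine.
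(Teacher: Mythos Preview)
Your approach and the paper's are the same exact-penalty idea: the feasible minimizer $y^{*}$ of (\ref{eqs:aux1}) also minimizes the proximal-augmented penalty functional $P(\cdot)+\tfrac{\rho}{2}\sum_i\|\cdot_i-x_i\|_{\Sigma_i}^{2}$, so $P(x)\ge P(y^{*})+\tfrac{\rho}{2}\sum_i\|y_i^{*}-x_i\|_{\Sigma_i}^{2}$, and since $\sum_i A_i y_i^{*}=b$ the bound is exactly of the form (\ref{eq:suff-lowerbound}). The paper's proof is essentially a one-liner: it writes out $P(x)-P(y^{*})$, notes that the constraint-violation terms vanish at $y^{*}$, and asserts the key inequality ``due to its optimality'' without further justification. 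Your KKT-based decomposition---the $\Phi$-identity, stationarity, complementary slackness, and the penalty-versus-multiplier comparison $\kappa_L\ge\|\mu\|_\infty$, $\lambda_L\ge\|\nu\|_\infty$---is precisely what is needed to make that assertion rigorous, so in that sense your argument is more complete than the paper's own.

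One point to flag: in bounding $\mu^{T}J_h(y^{*})(x^{(k)}-y^{*})\le\mu^{T}h(x^{(k)})$ you invoke ``convexity of the $h_i$'', but in this paper the constraints collected in (\ref{eq-packing}) are explicitly non-convex (e.g.\ (\ref{eq-prob21}), (\ref{eq-prob23})), so that linearization bound does not follow from the problem data. You would need either a curvature argument for the $h_i$ analogous to the one you already make for $\Phi$ (absorb the negative curvature into a still larger $\kappa_L$), or to appeal directly to a nonlinear exact-penalty theorem under a constraint qualification. You correctly identify the global-$\Phi$-convexity and uniform-multiplier-bound obstacles; extending the same care to the $h$-linearization step would close the remaining gap. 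The paper itself simply sidesteps all of this.
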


\begin{proof}
Due to its optimality, the difference in $P$ function is strictly positive.
\begin{equation}
\begin{aligned}
    &P(x) - P(y^*)\\
  = &\sum_{i=1}^N (f_i(x_i) + \frac{\rho}{2}|| x_i -x_i ||^2_{\Sigma_i}) + 
    \lambda_L || \sum_{i=1}^N A_ix_i - b || + \kappa_L \sum_i \max \{ 0, h_i(x_i)_j \} \\
                  & - \sum_{i=1}^N (f_i(y_i^*)  - \lambda_L || \sum_{i=1}^N A_iy_i^* - b || - \kappa_L \sum_{i,j} \max \{ 0, h_i(y_i^*)_j \}\\
      \geq & \frac{\rho}{2}\sum_{i=1}^N|| y_i^* -x_i ||^2_{\Sigma_i} \\
      = & \frac{\rho}{2} \sum_{i=1}^N || y_i^* - x_i ||^2_{\Sigma_i} + \lambda_L||\sum_{i=1}^N A_i y_i^* - b||
\end{aligned}
\end{equation}
The last step shows that the difference has a lower bound of the same form as the condition Eq.(\ref{eq:suff-lowerbound}), hence it can ensure a strict lower penalty.
\end{proof}%

Now that with this lemma we can always find a iterate solution than can converge, the natural step to take next is to see if it can be solved with its dual. This way it could better blend in a primal-dual update framework employed in the main algorithm. Fortunately, the duality gap can be shown to be zero between the auxiliary problem and its dual. This is shown by the following lemma.

\begin{lemma}
The auxiliary problem's dual problem
\begin{subequations}\label{eq:probaux}
\begin{alignat}{2}
\underset{\lambda}{\text{maximize}}~&
&&\inf_{y} \sum_{i=1}^N (f_i(y_i) + \lambda^T A_iy_i + \frac{\rho}{2} ||y_i - x_i||^2_{\Sigma_i} - \lambda^T b\\
\text{s.t.}~&
&& h_i(y_i) \le 0 \quad \forall i
\end{alignat}
\end{subequations}
has a zero duality gap.
\end{lemma}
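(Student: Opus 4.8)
The plan is to exhibit the auxiliary problem (\ref{eqs:aux1}), \emph{restricted to a neighbourhood of the current iterate $x$}, as a convex program in $y$ in which the affine consensus equality $\sum_i A_i y_i - b = 0$ is the only constraint that is dualized, and then to apply a standard strong-duality theorem. The first step is to record that the proximal term convexifies the objective: each summand $f_i(y_i) + \tfrac{\rho}{2}\|y_i - x_i\|_{\Sigma_i}^2$ has Hessian $\nabla^2 f_i(y_i) + \rho\Sigma_i$, and for the \emph{same} ``sufficiently large'' $\rho$ already used in Lemma~\ref{eq-prop1} (so that $\nabla^2(-U_b(\mathbf{l}_b^*)) + \rho\Sigma_b \succ 0$), continuity of the Hessian makes this matrix positive definite on a neighbourhood $\mathcal{V}$ of $x$. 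Consequently $g(y) := \sum_i f_i(y_i) + \tfrac{\rho}{2}\|y_i - x_i\|_{\Sigma_i}^2$ is strongly convex on $\mathcal{V}$; in particular the inner infimum that defines the dual function in (\ref{eq:probaux}) is finite and attained (coercivity being supplied by the quadratic term), so the dual is well posed, and since the coupling enters it only through the linear term $\lambda^\top A_i y_i$ that infimum separates into one small problem per BS --- which is exactly why this reformulation is useful for the distributed scheme.

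The second step is the duality argument proper. Keeping the block-separable inequalities $h_i(y_i)\le 0$ inside the feasible set $Y := \{\, y : h_i(y_i)\le 0\ \forall i \,\}$ and dualizing only the affine equality, I would study the perturbation (value) function
\begin{equation}
  v(u) \;=\; \inf\Big\{\, g(y) \;:\; y\in Y,\ \textstyle\sum_i A_i y_i - b = u \,\Big\},
\end{equation}
noting that $v(0)$ equals the primal optimum while $v^{**}(0)$ equals the optimal value of the dual (\ref{eq:probaux}), so that a zero duality gap is equivalent to $v$ being closed and convex at $u=0$. This I would obtain from three ingredients: (i) the perturbed constraint is affine; (ii) the assumed feasibility of the original problem (\ref{eq:prob1}) supplies a point of $Y$ that is strictly feasible for the inequalities, which is the Slater-type qualification needed when the perturbation is affine; and (iii) the strong convexity of $g$ from the first step makes $v$ proper and lower semicontinuous near the origin. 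Together these give $v(0)=v^{**}(0)$, i.e.\ a zero gap, with both the primal and the dual optima attained.

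The main obstacle is that $Y$ is \emph{not} convex: the constraints $h_i$ collect the smoothed SINR, rate and power relations appearing in (\ref{eq-prob4}), which are nonconvex in general, so the convexity of $v$ cannot be read off globally. My remedy is to keep the whole argument local. Working inside the neighbourhood $\mathcal{V}$ on which $g$ is strongly convex, and using the LICQ and second-order sufficiency already assumed at the KKT point $(\mathbf{l}^*,\lambda^*)$, the active set of $h$ is locally stable, so on $\mathcal{V}$ one may replace each active $h_i$ by its supporting half-space at the KKT point and invoke Rockafellar's saddle-point theorem for the proximally augmented Lagrangian \cite{rockafellar1973multiplier}: under exactly the positive-definiteness condition verified in the first step that Lagrangian possesses a local saddle point, which is precisely the statement that the local duality gap vanishes. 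What remains is bookkeeping --- checking that the linearization does not change the optimal value to first order and that the inner infimum in the dual is attained rather than merely approached --- and both follow from the strong convexity and coercivity already in hand.
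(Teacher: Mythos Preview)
The paper's own proof is a one-line citation to Theorem~1 of \cite{ruckmann_augmented_2009}; there is no internal argument to compare against. That reference supplies a ready-made strong-duality statement for proximally regularized nonconvex programs under second-order conditions, and the paper simply invokes it.

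Your attempt at a self-contained argument identifies the right obstacle --- the block-local set $Y=\{y:h_i(y_i)\le 0\ \forall i\}$ is nonconvex --- but the repair you offer does not close the gap. The perturbation route $v(0)=v^{**}(0)$ via a Slater-type qualification needs the primal to be a convex program over a convex feasible set; convexifying only the objective with the proximal term is not enough, so item~(ii) of your second paragraph fails as written. Your workaround, replacing each active $h_i$ by its supporting half-space at the KKT point, produces a \emph{different} feasible set and hence a different primal--dual pair; strong duality for that linearized surrogate does not transfer back to the original problem, and the passage you wave off as ``first-order bookkeeping'' is precisely the nontrivial content of the lemma. Finally, the saddle-point theorem of \cite{rockafellar1973multiplier} that you invoke concerns the \emph{augmented} Lagrangian, in which the quadratic term penalizes the constraint residual $\|\sum_i A_iy_i-b\|^2$; here the quadratic term is the proximal penalty $\tfrac{\rho}{2}\|y_i-x_i\|_{\Sigma_i}^2$ on the iterate, which plays a different role and to which that theorem does not directly speak. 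The clean route is the one the paper takes implicitly: appeal to a strong-duality result tailored to proximal regularization of nonconvex programs, rather than force the problem through a convex template it does not fit.
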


\begin{proof}
This proof follows the outlines given by literature on proximal operator analysis in Theorem 1 of \cite{ruckmann_augmented_2009}.
\end{proof}

In effect, we discovered a sufficient condition on the iterate solution that will ensure the strict decrease in the penalty function, and then construct an auxiliary problem whose solution satisfies it. Additionally, this auxiliary problem has a dual problem where there is no duality gap, making it a natural choice in a primal-dual update iteration. With these additional tuning steps the algorithm can ensure the convergence even when the initial guess is far from the optimum solution. This process is another demonstration of the compromise between convergence and complexity commonly found in the algorithm design. This additional procedure is summarized as below:

\begin{algorithm}[htbp]
\caption{Step Size Tuning Procedure for Convergence Assurance}\label{alg:step}
\SetKwFunction{Fn}{step\_size\_tuning}
\KwData{Previous and current iterate solutions $x^{(k+1)},x^{(k)}$}
\KwResult{$\alpha_1, \alpha_2, \alpha_3$ as used in Algorithm (1)}

\Fn{$x^{(k+1)}$,$x^{(k)}$}{

$\alpha_1 := 1$, $\alpha_2:=1$, $\alpha_3:=1$\;
$\lambda_L \gg 1$, $\kappa_L \gg 1$, $0 < \gamma \ll 1$\;
\If{$\Delta P >$ condition}{
  $\alpha_1 \leftarrow 1$, $\alpha_2 \leftarrow 1$, $\alpha_3 \leftarrow 1$
}
\ElseIf{$\Delta P(y) >$ condition}{
  $\alpha_1 \leftarrow 1$, $\alpha_2 \leftarrow 1$, $\alpha_3 \leftarrow 1$
}
\Else{
  Solve the dual auxiliary problem (\ref{eq:probaux}):

  $\alpha_1 \leftarrow 1$, $\alpha_2 \leftarrow 1$, $\alpha_3 \leftarrow \alpha^*$
 
}
}
\end{algorithm}


\section{Numerical Results}\label{sec-num}
This section illustrates the proposed algorithm's results in the scenario depicted in Fig. \ref{fig-system}. We perform a system-level multi-cell simulation, where each cell covers a square area with each side 200 meters. The BS are put together on a grid, and are assumed to ``wrap around'', i.e., upper most cell is a neighbor of lower most cell, for equal treatment of edge users. These small cell base stations form a cooperative cluster to represent a hotspot coverage tier. We assume that they are connected through either fiber or wireless backhaul and the macro cell BS uses non-overlapping resource blocks. The users are distributed evenly in the cell, with a minimum distance between each other and from the BS. 

The channels under consideration are flat and Rayleigh within a coherence block, i.d. $\mathbf{h}_{b,bk}$ are i.i.d with $ \mathcal{CN}(0,\mathbf{\Theta}_{b,k})$ where $\bm{\Theta}_{b,k}$ is the correlation. The large-scale fading coefficients satisfy log-normal distribution with Urban Macro Model of $-139.5 - 35\log_{10} d_{b,b\prime k\prime} + \Upsilon$ where $\Upsilon$ is the shadow fading with Gaussian distribution \cite{3gpp.25.996}. Unless otherwise noted SBS transmitters have a budget of $40$ dBm. The receiver noise are set to be $N_n = -120$ dBm in a band of $10$ MHz. The other experiment parameters are set in a similar fashion as in \cite{bethanabhotla2014user}.

In the simulations we consider the following approaches for comparison. As a default baseline we use zero-forcing (ZF) precoding and equal power distribution with no BS antenna power optimization; then we compare the effects of incorporating these factors to justify our choice. For algorithm comparison we also compare the convergence behavior with respect to another reported used method, Dinckelbach method for ratio programming.

\subsection{Performance Evaluation}
In Fig. \ref{fig:ee-powerbudget}, we compare the effects of using different precoding schemes on the system energy efficiency. This simulation is done with default settings and vary the available power budget for each BS antenna. We can observe that when the amount of power is low there is not much of a difference in the system EE. This is mainly due to three precoding schemes offering similar level of throughput. When the available power increases to a typical level, the proposed algorithm achieves a higher EE, which is due to the fact that with a larger search space it is easier for the algorithm to make use of antennas and balance it with system power usage. Also the level of possible ICI is also higher, which is not factored in other schemes. However, this leads to performance saturation in high power regime, suggesting that the system EE gain by optimizing precoder is not high.
\begin{figure}[h]
\begin{subfigure}{.45\linewidth}
  \includegraphics[width=\textwidth]{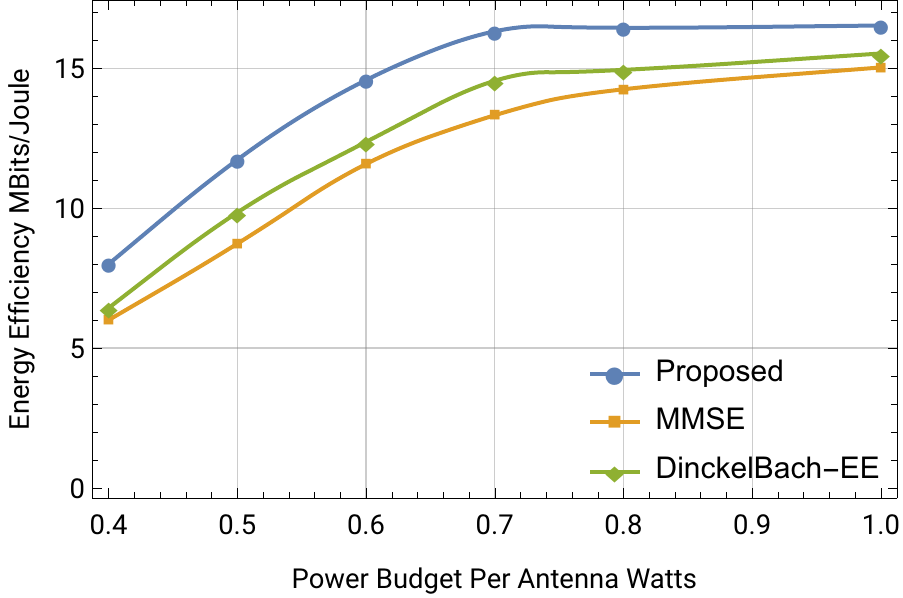}
  \caption{System EE versus Power Budget}
  \label{fig:ee-powerbudget}
  \end{subfigure}
  ~
  \begin{subfigure}{.45\linewidth}
  \includegraphics[width=\textwidth]{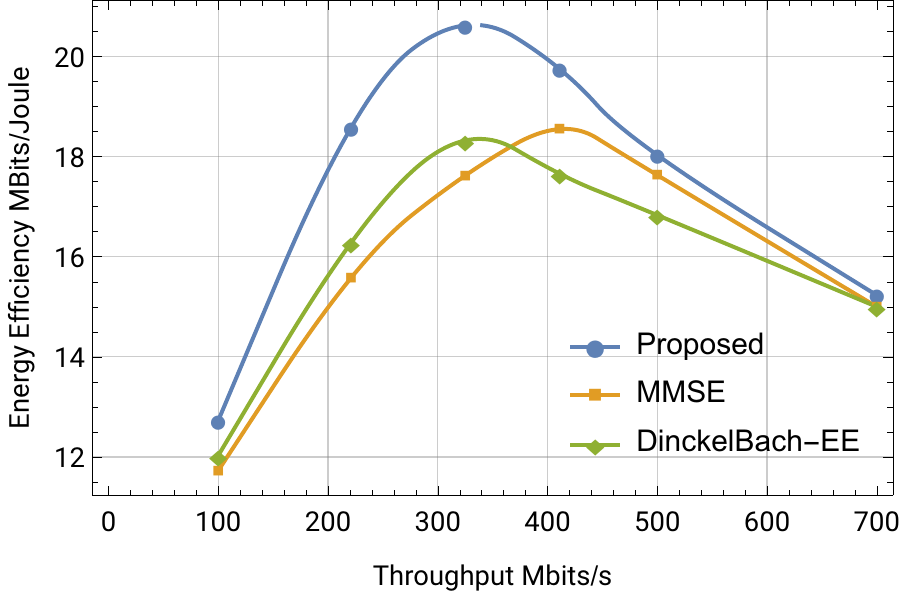}
  \caption{System EE versus SINR requirement}
  \label{fig:ee-sinr}
\end{subfigure}
\end{figure}

In Fig. \ref{fig:ee-sinr}, the relationship of the system EE and system SINR is explored. The result show that different algorithms have a similar trend in the EE-SE relationship: there exists a regime where SE and EE can be improved simultaneously, and after the best ``sweet spot'' the fundamental trade-off between the two starts to appear. Their maximum EE are all achieved at similar levels of system throughput, at around 350 Mbit/s. However, the proposed scheme still achieves a higher EE due to the fact that in the low throughput regimes, opportunistically turning off antennas could result in power saving while achieving similar level of throughput.

In Fig. \ref{fig:ee-bsdensity}, we compare the performance of system EE when the base station density changes. We control this by setting the distance between BS's, and can see from the results that in the case of higher BS density, the proposed algorithm can help find precoding scheme that balance between reducing the inter-cell interference using existing diversity and maintaining high efficiency; the compared schemes have lower efficiency in this scenario because the limited coordination between the BS's.
\begin{figure}[h]
\centering
\begin{subfigure}[b]{0.45\textwidth}
  \includegraphics[width=\textwidth]{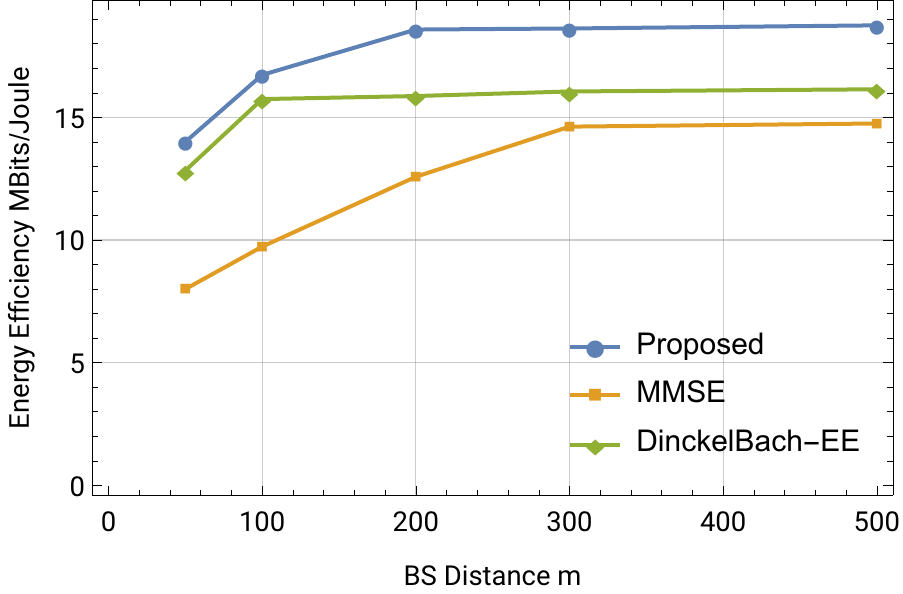}
  \caption{System EE versus BS Distance}
  \label{fig:ee-bsdensity}
  \end{subfigure}
  ~
  \begin{subfigure}[b]{0.45\textwidth}
  \includegraphics[width=\textwidth]{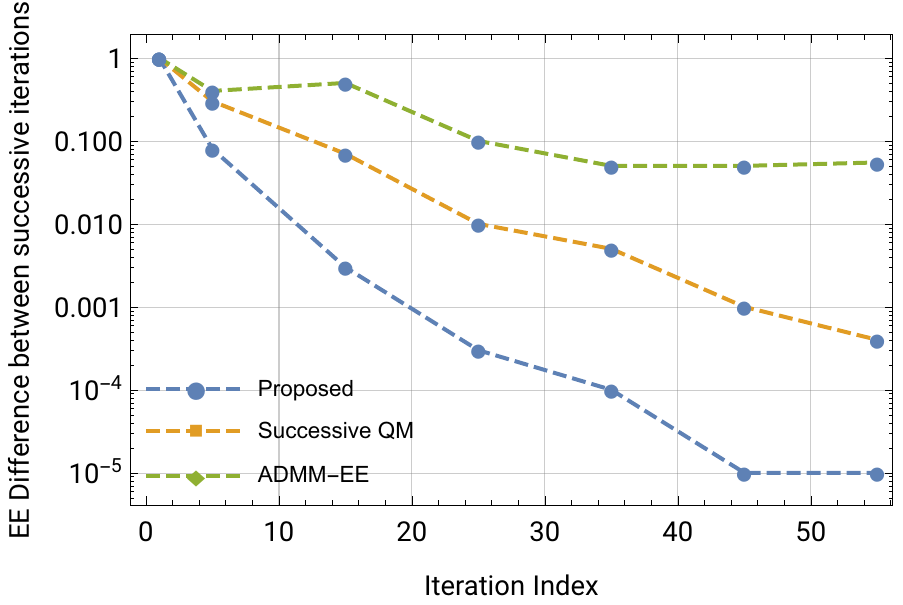}
  \caption{System EE Change versus Iterations}
  \label{fig-ee-t}
\end{subfigure}
\end{figure}

In Fig. \ref{fig-ee-t}, the convergence behavior of the algorithm is given and compared with that from another work \cite{mili_joint_2016}, where a related but different method, successive quadratic programming is used. With equal initial parameters and solution estimates, the superior convergence of proposed scheme is demonstrated by the better descent of the distance with the solution value. Even when they eventually achieve similar linear convergence in the long run, better initial descent in the first rounds is shown to important. As another comparison, the EE objective with direct ADMM methods are used. As can be observed, in this case the algorithm have convergence issue after the first iterations.

\subsection{System Parameter Selection}
In this subsection we are interested in using the proposed algorithm to see how it could help guide a system designer to choose appropriate parameters towards a more energy efficient system. The scenario under consideration is a cooperating SBS cluster, and we apply the algorithm to loop through system parameters $K$ and $N_b$, in the ranges $\{20-100\}$ and $\{50-200\}$ respectively, to look for the optimum system energy efficiency configurations.

We can see in Fig. \ref{fig:ee-3d} that even when we perform an optimum precoding scheme, the large scale system behavior remains similar to the theoretical analysis. With growing user number per cell and larger number of BS antennas, the energy efficiency rapidly increases to a high point, then further increase would drag the system energy efficiency down. The optimum level of BS antennas to UE closely matches the one-order-of-magnitude thumb rule, ranging from $3$ - $8$, depending on the system configurations.

\begin{figure}[t!]
\begin{subfigure}[t]{.45\textwidth}
\centering
\includegraphics[width=\textwidth]{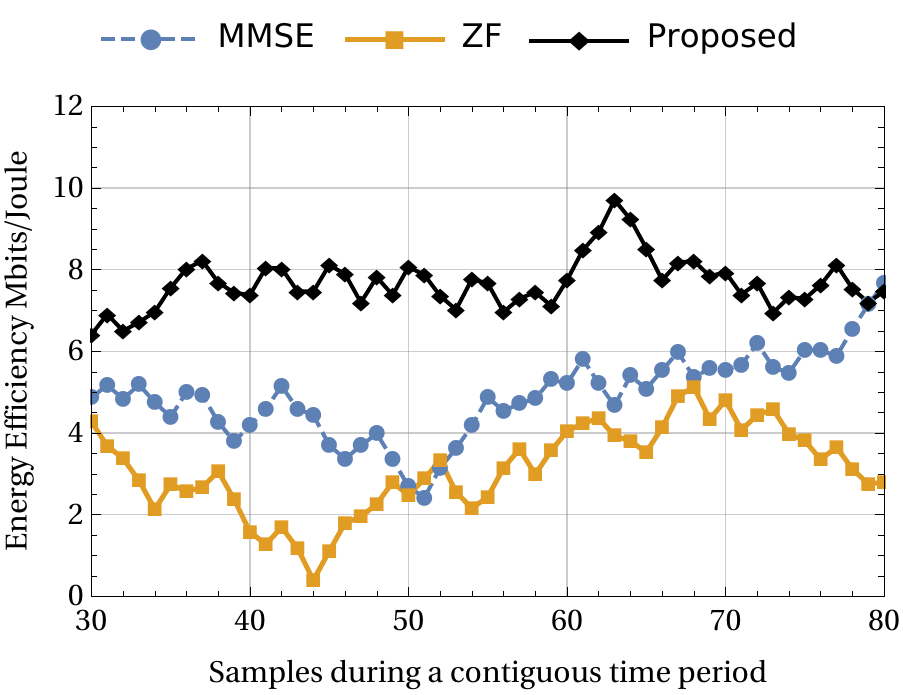}
\caption{System EE versus Time}
\label{fig-ee-time}
\end{subfigure}%
~
\begin{subfigure}[t]{.45\textwidth}
\centering
\includegraphics[width=\textwidth]{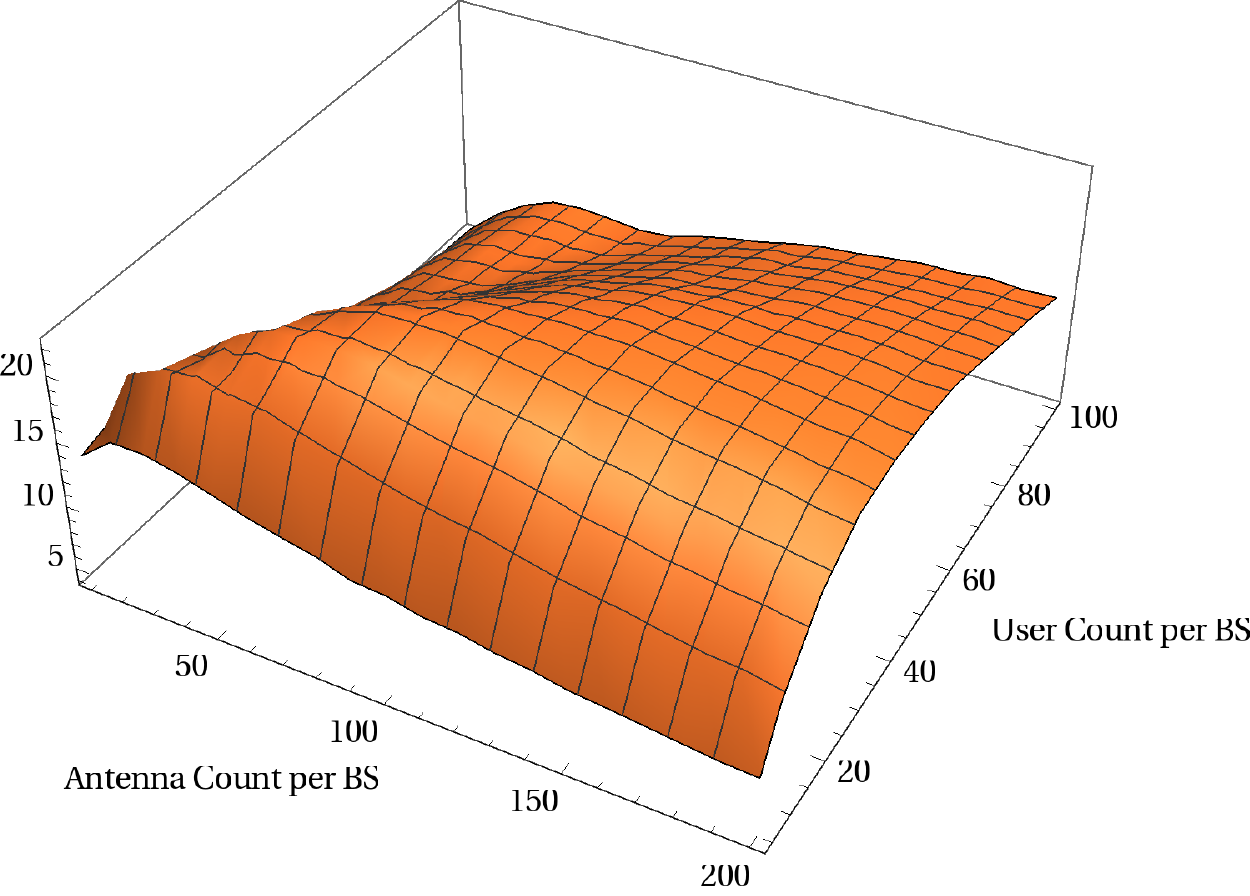}
\caption{System EE versus the average user count per BS and Antenna count per BS}
\label{fig:ee-3d}
\end{subfigure}
\end{figure}

The results suggest that 1) massive MIMO base stations is the way for high energy efficiency communication system, and despite a more refined BS power modeling, its high spectral efficiency is still a dominant factor in its energy efficiency; 2) the optimum EE operating configurations in massive MIMO is sensitive to many parameters like hardware efficiency and signal processing cost, hence system designers need to build a robust and representative model to fully utilize its potential. 3) EE as a design goal is in most cases not optimum in SE or power consumption; it is neither an indicator of Pareto optimality. When it comes to a multi-goal optimization problem like this, energy efficiency can only serve as an estimation of system efficiency.

\section{Summary}\label{sec-summary}
In this paper we consider the energy-efficient control of BS power allocation, switching policy, antenna selection and beamforming in an integrated framework, in the setting of two-tier Massive MIMO HetNet. We formulate this problem as a network utility maximization problem, and propose a decentralized algorithm to solve it, using numerical techniques from sequential quadratic programming and augmented multipliers. The proposed scheme is evaluated in numerical experiments and is demonstrated to achieve a superior performance.

{\renewcommand{\bibfont}{\small} \printbibliography}

\end{document}